\newcommand{\stlctype}[3]{{#1} \vdash {#2} : {#3}}
\newcommand{\sketch}[1]{#1^{\circ}} 
\newcommand{\meta}[1]{
  \ensuremath{%
    #1
  }
}
\newcommand{\acquire}[1]{\mathrm{Acquire}(#1)}
\NewDocumentCommand{\unify}{ m }{\!\mathrm{Unify}(\smash{#1})}
\newcommand{\ensure}[1]{#1} 
\definecolor{skunknown}{RGB}{180,0,180}
\newcommand{\sk}[1]{\textcolor{skunknown}{#1}} 
\newcommand{\skhzc}[1]{#1}
\newcommand{\mvhzc}{\text{FV}}
\newcommand{\welltyped}[1]{\mathit{well\text{-}typed}\left(#1\right)}
\newcommand{\func}[1]{\mathrm{#1}}          
\newcommand{\updir}{\textcolor{black}{↑}}  
\newcommand{\downdir}{\textcolor{black}{↓}}
\definecolor{rycolor}{RGB}{150, 0, 0}
\newcommand{\toolname}[1]{\textsc{#1}} 
\newcommand{\tcode}[1]{{\small \texttt{\spaceskip=0.2em #1}}}
\newcommand{\scode}[1]{{\fontsize{8.2pt}{\baselineskip}\tt \texttt{\spaceskip=0.15em #1}}}
\newcommand{\redbox}[1]{{\setlength{\fboxsep}{2pt}\colorbox{red!20}{#1}}}
\newcommand{\greenbox}[1]{{\setlength{\fboxsep}{2pt}\colorbox{green!20}{#1}}}
\newcommand{\mainname}{\toolname{TyFlow}\xspace}
\newcommand{\stlc}{\lambda^{\rightarrow}}
\begin{document}
\title{\mainname: A Type-Aware Approach to Neural Code Models}
\author{Zhechong Huang}
\email{willhuang@stu.pku.edu.cn}
\affiliation{%
  \institution{Peking University}
  \city{Beijing}
  \country{China}
}
\author{Zhao Zhang}
\email{zhangzhao2019@pku.edu.cn}
\affiliation{%
  \institution{Peking University}
  \city{Beijing}
  \country{China}
}
\author{Ruyi Ji}
\email{jiry@umich.edu}
\affiliation{%
  \institution{University of Michigan}
  \city{Ann Arbor}
  \country{USA}
}
\author{Tingxuan Xia}
\email{kkogoro@stu.pku.edu.cn}
\affiliation{%
  \institution{Peking University}
  \city{Beijing}
  \country{China}
}
\author{Qihao Zhu}
\email{Zhuqh@pku.edu.cn}
\affiliation{%
  \institution{Peking University}
  \city{Beijing}
  \country{China}
}
\author{Qinxiang Cao}
\email{caoqinxiang@gmail.com}
\affiliation{%
  \institution{Shanghai Jiao Tong University}
  \city{Shanghai}
  \country{China}
}
\author{Zeyu Sun}
\email{zeyu.zys@gmail.com}
\affiliation{%
  \institution{Institute of Software, Chinese Academy of Sciences}
  \city{Beijing}
  \country{China}
}
\author{Wiggin Zhou}
\email{wigginzhou@tencent.com}
\affiliation{%
  \institution{Tencent}
  \city{Beijing}
  \country{China}
}
\author{Yingfei Xiong}
\authornote{Corresponding author.}
\email{xiongyf@pku.edu.cn}
\affiliation{%
  \institution{Peking University}
  \city{Beijing}
  \country{China}
}


\begin{abstract}
Language models have shown remarkable proficiency in code generation; nevertheless, ensuring type correctness remains a challenge. Although traditional methods, such as constrained decoding, alleviate this problem by externally rejecting untypable code, the model itself does not effectively learn type reasoning internally, which ultimately limits its overall performance. 

This paper introduces \mainname, a novel system that internalizes type reasoning within code generation to guide the model to learn the type system. 
The core of our approach is a novel type-guided program synthesis system that maintains an isomorphism between type derivation trees and synthesis derivation trees, enabling a new code representation based on synthesis decision sequences rather than traditional text-based token sequences. By offloading the complexity of type system learning to the representation itself, models can redirect their computational resources toward higher-level program semantics.

Our evaluation shows that \mainname not only eliminates type errors but also significantly improves functional correctness, highlighting the importance of aligning LMs with type systems internally.

\end{abstract}

\keywords{Type Correctness, Program Synthesis, Code Generation, LLM}

\begin{CCSXML}
<ccs2012>
<concept>
<concept_id>10011007.10011006.10011008.10011009.10011015</concept_id>
<concept_desc>Software and its engineering~Automatic programming</concept_desc>
<concept_significance>500</concept_significance>
</concept>
<concept>
<concept_id>10011007.10011006.10011039</concept_id>
<concept_desc>Software and its engineering~Formal language definitions</concept_desc>
<concept_significance>300</concept_significance>
</concept>
</ccs2012>
\end{CCSXML}

\ccsdesc[500]{Software and its engineering~Automatic programming}
\ccsdesc[300]{Software and its engineering~Formal language definitions}


\maketitle

\section{Introduction} \label{section:intro}
Recently, language models (LMs) have achieved great success in code generation~\cite{li2022competition, DBLP:conf/pldi/Xu0NH22, zheng2023codegeex, DBLP:journals/corr/abs-2108-07732}, where code is generated based on a natural language prompt.
LMs typically treat programs as plain text, encoding them as sequences of tokens and generating code by iteratively predicting the next token given the current partial program. After being trained on large corpora of code and natural language, such LMs can generate programs on demand with high accuracy.

However, language models (LMs) frequently struggle with type systems, often producing code that violates basic type constraints. 
Empirical studies indicate that type errors alone account for 33.6\% of failed LM-generated programs~\cite{DBLP:journals/ese/TambonDNKDA25, DBLP:journals/corr/abs-2407-06153}. 
Notably, even cutting-edge tools powered by state-of-the-art LMs can generate code that fails to compile.
For instance, an evaluation of GitHub Copilot on LeetCode revealed that 24\% of its suggestions resulted in compilation errors~\cite{DBLP:conf/msr/NguyenN22}, which mainly resulted from type errors~\cite{DBLP:journals/corr/abs-2504-09246}. 
This type validity problem becomes particularly pronounced in resource-limited languages or those with complex type systems~\cite{DBLP:journals/corr/abs-2406-00515,DBLP:journals/corr/abs-2410-03981}.

Ensuring type correctness is also useful beyond reducing type errors. First, the type system is essentially a static analysis system, and in theory, we can incorporate any decidable constraint into the type system.
For example, Rust ensured no invalid memory access through an ownership type system. Therefore, by enriching the type system, we can enforce any decidable safety property of the code. Second, as revealed by recent studies~\cite{zhang2025gramtransbettercoderepresentation, DBLP:conf/acl/LiangZ0LLXCZZZC25}, even if LMs can learn to satisfy simple code constraints such as syntactic correctness given sufficient training data, by assisting LMs through external facilities for satisfying the constraints during training and inference, the overall performance of the model will further improve, possibly because the computational resources of the model can then be saved for higher-level tasks, such as algorithm design and aligning the semantics between the programs and the natural language description. 

Ensuring type correctness in practice, however, remains challenging. 
As illustrated in \autoref{fig:codet5-incorrect}, which presents an ill-typed output from a widely-used and fine-tuned model \toolname{CodeT5}~\cite{DBLP:conf/emnlp/0034WJH21}, the model incorrectly references an undefined variable \tcode{num}, which could have been prevented through proper type checking.
Essentially, the challenge originates from the \textit{huge gap} between the text-based token sequence representation of programs and the actual structural process of type checking.

\begin{figure}
\hfill
    \begin{subfigure}{0.45\textwidth}
        \begin{lstlisting}[language=java, basicstyle=\footnotesize\ttfamily, escapeinside={@}{@}, numbers=left, numbersep=2pt]
int firstOdd(List<Integer> l) {
  for (int i = 0; i < l.size(); i++)
    if (l.get(i) % 2 != 0)
      return @\greenbox{l.get(i)}@;
  return -1;
}
        \end{lstlisting}
\vspace{-0.5em}
\caption{The expected function.}
\vspace{1em}
        \label{fig:codet5-correct}
    \end{subfigure}
    \hfill
    \begin{subfigure}{0.45\textwidth}
\centering 
\begin{minipage}{0.6\textwidth}
\begin{lstlisting}[language=java, basicstyle=\footnotesize\ttfamily, escapeinside={@}{@}, numbers=left, numbersep=3pt, firstnumber=3]
if (l.get(i) % 2 != 0)
  return @\redbox{num}@;
\end{lstlisting}
\end{minipage}
\vspace{-0.5em}
\caption{The ill-typed code generated by \toolname{CodeT5}}
\label{fig:codet5-incorrect}
\vspace{1em}
\begin{minipage}{0.6\textwidth}
\begin{lstlisting}[language=java, basicstyle=\footnotesize\ttfamily, escapeinside={@}{@}, numbers=left, numbersep=3pt, firstnumber=3]
if (l.get(i) % 2 != 0)
  return @\redbox{i}@;
\end{lstlisting}
\end{minipage}
\vspace{-0.5em}
\caption{The still-incorrect code after incorporating constraint decoding for types.}
        \label{fig:codet5-constraint-incorrect}
    \end{subfigure}
\hfill\ 
\caption{A code generation task and the incorrect Java programs generated by \toolname{CodeT5}, where the prompt comprises the function signature, input-output examples, and the description ``\emph{return the first odd element from a list (-1 by default)}''. We highlight the incorrect code as \redbox{red} and the expected counterpart as \greenbox{green}.}
    \label{fig:intro-sample}
\end{figure}

\begin{itemize}
    \item During training, LMs can only see sequences of syntax tokens, which provides little information on the underlying type system. From these sequences alone, LMs struggle to recover and learn the entire type system.
    For example, typing the function in \autoref{fig:codet5-correct} relies on the signature of the \tcode{get} method and the auto-unboxing rule in Java that allows \tcode{Integer} to be returned as \tcode{int}, but both rules are obscured within the text-based token sequence.
    
    \item During generation, LMs must ensure the type correctness of the produced code. However, type-checking the partially generated code is challenging, as it often requires complex analysis to gather necessary contextual information from the existing code. 
    For instance, to check variable availability, the LM must identify all prior definitions (such as \tcode{l} and \tcode{i} in \autoref{fig:codet5-correct}) and determine whether each variable remains in scope, which necessitates non-trivial global reasoning across the entire codebase. Existing LMs often struggle with such complex contextual reasoning, especially as the codebase grows in size~\cite{guo_2025,DBLP:conf/acl/BaiTZ0WLCX0D0L25}.
\end{itemize}

To enforce type correctness, a typical approach is \emph{rejecting invalid generations}, as realized by a series of {constrained decoding} approaches~\cite{DBLP:conf/sigsoft/0003X023,DBLP:journals/corr/abs-2504-09246}.
These approaches leverage external type checkers and reject untypable programs generated by LMs.
However, such an external patch cannot bridge the gap in principle.
Constrained decoding can only exclude ill-typed programs; it does not help the model learn type systems, and thus cannot improve, and may even worsen, the distribution among well-typed programs.~\cite{DBLP:conf/nips/ParkWBPD24}.
For example, in our sample task, \toolname{CodeT5} assigns a low probability to the expected expression (\autoref{fig:codet5-correct}), possibly because the LM is unaware of the auto-unboxing rule and thus incorrectly identifies it as ill-typed.
Constrained decoding helps little with this case, as shown in \autoref{fig:codet5-constraint-incorrect} -- it merely shifts to a still incorrect but well-typed program that avoids auto-unboxing.

To guide the model in learning the type system, a straightforward approach is to employ a reasoning model that augments the program with a type correctness proof; that is, the model first generates the program and then generates its type correctness proof as the type reasoning process, or vice versa. In this way, the type reasoning is made explicit to the model, enabling it to learn the type system of the target program.
However, the separation between program generation and type reasoning introduces fundamental limitations. During training, presenting them as separate phases prevents LMs from capturing their fine-grained interactions, requiring extra effort to align them. During inference, the model must still perform global type reasoning without localized typing context, leaving the challenge of contextual reasoning unresolved.
As a result, this approach not only hinders learning efficiency but also causes inconsistencies between type reasoning and generated code. As our evaluation will later demonstrate, it does not yield satisfactory performance.

Building on these insights, we aim to propose a new synthesis system that \emph{reconciles type systems internally with LMs}. 
Our approach leverages the fact that LMs are not confined to generating programs as plain text; they can be incorporated into any search-based system that constructs programs via sequences of decisions, with each decision made in a specific context. 
For any such system, an LM can be trained to predict the optimal decision at each step, enabling program synthesis by interacting with the LM throughout the search process.

In this way, reconciling type systems with LMs reduces to designing a synthesis system $\mathcal S$ that cooperates effectively with LMs while ensuring type correctness.
Specifically, we expect the system $\mathcal S$ to satisfy the following properties, which succinctly summarize the main challenges of LM-based code generation as we discussed earlier:
\begin{itemize}
 \item \textbf{Type Explicitness}. $\mathcal{S}$ should explicitly trace the type derivation throughout the decision process, making the type reasoning visible to the model.
 \item \textbf{Context Locality}. At each step, $\mathcal{S}$ should present all necessary type information within a small, bounded context window. This relieves LMs from the burden of inferring such information from lengthy contexts.
 \item \textbf{Derivation Vicinality}. The decisions in $\mathcal{S}$ should be represented such that each program fragment appears adjacent to its type derivation, making it easier for LMs to learn the correspondence between code and type reasoning during training.
 \item \textbf{Data Usability}. The decision sequences of $\mathcal{S}$ and source code should be automatically and bidirectionally convertible, enabling both the extraction of decision sequences from existing programs and the reconstruction of programs from generated decision sequences.
\end{itemize}

\textbf{\textit{Our first contribution}} is the observation that, under constructive logic, an existential proof of type correctness inherently contains the program it proves. Formally, an existential proof $\Pi$ of the following statement
\[
\exists p.\; \mathsf{welltyped}(p)
\]
must explicitly construct a witness program $p$. Thus, there exists an extraction function $\mathsf{extract}$ such that $\mathsf{extract}(\Pi) = p$. We leverage this insight to design a novel synthesis system and program representation that unifies program generation with type reasoning.

Based on this observation, the process of constructing an existential proof can be reformulated as a program synthesis process. The proof is built by applying typing rules step by step, where each step makes a \textit{decision} that either applies a rule to decompose the current goal into subgoals, or instantiates a variable within the context. Since the program is embedded within the proof, this sequence of decisions simultaneously determines the program being synthesized.

We use this decision sequence as a novel program representation, which satisfies the desired properties:
\begin{itemize}
    \item \textbf{Type Explicitness}. Constructing the decision sequence is equivalent to constructing the existential type correctness proof, making the complete type reasoning explicit throughout the process.
    \item \textbf{Derivation Vicinality}. The representation inherently captures both the program and its type correctness proof simultaneously, as they arise from the same construction process.
\end{itemize}

Notably, all proof rules in our system are formulated as constrained Horn clauses (see \autoref{chc-intro}). This representation not only subsumes the prior work on syntactic correctness~\cite{DBLP:conf/icse/ZhuL000C24,DBLP:conf/acl/LiangZ0LLXCZZZC25} but also accommodates a broader class of Turing-computable specifications. While this paper focuses on type constraints, our synthesis method is general and applicable to arbitrary logical constraints.

\textbf{\textit{Our second contribution}} is a novel encoder-decoder model architecture tailored for our synthesis framework. During synthesis, the current proof state evolves at each step, and the model needs to extract the dynamic typing context from this state to make the next decision. Our architecture addresses this by using the encoder to process both the natural language specification and the current synthesis goal, while the decoder autoregressively generates the decision sequence. This design enables the model to access localized type information at each step without requiring global reasoning over the entire codebase, thereby satisfying the \textbf{Context Locality} property.

\textbf{\textit{Our third contribution}} is \mainname, an automated system for training LMs to generate well-typed programs. \mainname takes as input (1) the language definition with its syntax, typing rules, and associated toolchain (including a parser and a type checker), and (2) a set of training tasks comprising prompts and target programs. 
From these inputs, \mainname automatically instantiates the synthesis framework for the target language, providing: (1) a training component that extracts decision sequences from programs and their existential type correctness proofs, and (2) a generation component that first generates decision sequences within the synthesis system, and then extracts well-typed programs from them.
This satisfies the \textbf{Data Usability} property: any well-typed program permits the extraction of its corresponding decision sequence, while any valid decision sequence can be reconstructed into a well-typed program along with its proof.

\smallskip
Our experimental results show that, relative to vanilla code generation, \mainname not only eliminates untypable programs\footnote{with respect to the type system we implemented}, but also substantially increases the number of programs that pass all unit tests. These findings indicate that \mainname successfully helps the LM acquire a deeper understanding of the type system and yields a more precise distribution over correct programs.

The rest of the paper is organized as follows. 
\autoref{sec:overview} gives an overview of our approach. 
\autoref{sec:meta} introduces how we translate the typing rules into synthesis rules. 
\autoref{sec:approach} proves the correctness of our framework and introduces how we prepare the training data and perform synthesis within the framework. 
\autoref{sec:implementation} introduces our neural network architecture. 
\autoref{sec:evaluation} describes the evaluation of our approach. 
Finally, \autoref{sec:related} discusses related works and \autoref{sec:conclusion} concludes the paper. 
\section{Overview \label{sec:overview}}

This section gives an overview of our approach \mainname, illustrating how it generates a well-typed program in the simply typed lambda calculus $\stlc$(a foundational functional programming language)~\cite{DBLP:journals/jsyml/Church40}.

\subsection{Problem Definition} \label{subsection:moti-problem}
We formulate the type-safe code generation problem based on two primary inputs: a language specification and a dataset of training tasks.

\newcommand{\checkBinding}[3]{\ensuremath{{#2}:{#3} \in {#1}}}
\newcommand{\abs}[3]{\ensuremath{\lambda {#1}:{#2}.\ {#3}}}
\newcommand{\vari}[1]{\ensuremath{{#1}}}
\newcommand{\snoc}[3]{\ensuremath{{#1}, {#2}:{#3}}}
\newcommand{\arr}[2]{\ensuremath{{#1}\rightarrow {#2}}}
\newcommand{\app}[2]{\ensuremath{{#1}\ {#2}}}
\newcommand{\rulename}[1]{\textsc{#1}}

\begin{figure*}[t]
\hfill
\begin{subfigure}{0.65\textwidth}
\begin{lstlisting}[basicstyle=\fontsize{8.2pt}{\baselineskip}\tt, escapeinside={@}{@}, morekeywords={data, pred, rule, func}, basewidth={0.5em, 0.2em}] 
data Prog = true | false | Var | Prog Prog | @$\tt \lambda$@Str:Type. Prog
data Type = bool | Type @$\rightarrow$@ Type      
data Context = empty | Context, Var:Type
\end{lstlisting}
\vspace{-0.5em}
\caption{The inductive data types for the syntax.} \label{figure:stlc-syntax}
\end{subfigure}
\hfill\ 

\vspace{1em}

\begin{subfigure}{0.8\textwidth}
\small

\begin{gather*}
\frac{\checkBinding{\Gamma}{x}{t}}{\stlctype{\Gamma}{\vari x}{t}} \ \ \textsc{(T-Var)} \qquad
\frac{\stlctype{\snoc{\Gamma}{x}{t_1}}{p}{t_2}}{\stlctype{\Gamma}{\left(\abs{x}{t_1}{p}\right)}{\arr{t_1}{ t_2}}} \ \ \textsc{(T-Abs)} \qquad
\frac{\stlctype{\Gamma}{p_1}{\arr{t_1}{t_2}} \quad \stlctype{\Gamma}{p_2}{t_1}}{\stlctype{\Gamma}{\app{p_1}{p_2}}{t_2}} \ \ \textsc{(T-App)} \qquad
\end{gather*}
\vspace{-0.5em}
\caption{Selected typing rules for $\stlc$. Each typing rule derives a typing judgment of the form $\stlctype{\Gamma}{p}{t}$, which asserts that program $p$ has type $t$ under context $\Gamma$} 
    \label{figure:stlc-rules}
\end{subfigure}
\vspace{-0.5em}
\caption{The definition of $\stlc$ in \mainname, comprising its syntax and some selected typing rules..
}
\label{figure:stlc-definition}
\end{figure*}

\smallskip 
\noindent \textbf{Language definition}. As shown in \autoref{figure:stlc-definition}, a language in \mainname is specified by its syntax and its type system.
In this paper, we define the syntax as a set of inductive data types over several built-in base types (e.g., \tcode{string}, \tcode{bool}). 
Here, the syntax of $\stlc$ consists of three data types (\autoref{figure:stlc-syntax}):
\tcode{Prog} and \tcode{Type} define programs and types in $\stlc$ as the standard simply typed lambda calculus, above the base type \tcode{bool}; and 
\tcode{Context} specifies a context for variable types, where \tcode{empty} denotes the empty context, and ${(\snoc{\Gamma}{\tt x}{\tt t})}$ extends a context $\Gamma$ by binding the type \tcode{t} to the variable \tcode{x}.

To specify the type system, we require the definition of each typing rule (which defines how types are assigned to program constructs) together with an executable type checker.
For clarity, we present our approach using the standard type notations of $\stlc$ as listed in \autoref{figure:stlc-rules}.
The type checker is expected to generate a type correctness proof (i.e., a type derivation tree, where each tree node corresponds to an application of a typing rule, like \autoref{figure:sample-type-tree}) for the typing judgement $\stlctype{\tt empty}{p}{t}$, which asserts that program $p$ has type $t$ under the empty context.

\smallskip 
\noindent \textbf{Training tasks}. Following the standard code generation paradigm, the objective is to generate correct programs from natural language specifications. Therefore, each training task in \mainname is specified by a prompt paired with the expected program.
\autoref{figure:sample-task} introduces an example task over $\stlc$, and \autoref{figure:sample-type-tree} illustrates the type correctness proof of the expected program generated by the type checker during type checking.

\begin{figure*}[t]
\begin{minipage}{0.3\textwidth}
    \small 
    \fbox{
        \begin{minipage}{0.98\textwidth}
\textbf{Prompt}: Implement an identity function that takes a boolean input and returns the same value; then apply it to the value true.
 
\vspace{0.5em}
\textbf{Expected program} $p^*$: 
\begin{align*}
    \app{
        (\abs{\scode{x}}{\scode{bool}}{\vari{\scode{x}}})
    }{
        \text{\scode{true}}
    }
\end{align*}
\end{minipage}
}
\caption{A training task over $\stlc$.}
\label{figure:sample-task}
\end{minipage}
\hfill
\begin{minipage}{0.65\textwidth}
\small 
\hfill\ 

\begin{prooftree}[rule margin=1ex, separation=1em]
\hypo{
  \checkBinding{
    \left\{\snoc{\scode{empty}}{\scode{x}}{\scode{bool}}\right\}}
    {\scode{x}}
    {\scode{bool}}
}
\infer1[(\textsc{T-Var})]{\stlctype{\snoc{\scode{empty}}{\scode{x}}{\scode{bool}}}{\scode{x}}{\scode{bool}}}
\infer1[(\textsc{T-Abs})]{\stlctype{\scode{empty}}{\abs{\scode{x}}{\scode{bool}}{\scode{x}}}{\arr{\scode{bool}}{\scode{bool}}}}
\hypo{~} 
\infer1[(\textsc{...})]{\stlctype{\scode{empty}}{\scode{true}}{\scode{bool}}}
\infer2[(\textsc{T-App})]{\stlctype{\scode{empty}}{\app{(\abs{\scode{x}}{\scode{bool}}{\scode{x}})}{\scode{true}}}{\scode{bool}}}
\end{prooftree}

\caption{The type correctness proof of the expected program $p^*$ in \autoref{figure:sample-task}.}
\label{figure:sample-type-tree}
\end{minipage}
\end{figure*}

\subsection{From Proof Construction to Program Synthesis} \label{subsection:moti-representation}
Given the inputs in \autoref{figure:sample-task}, \mainname aims to generate well-typed $\stlc$ programs from prompts. Our core insight is that we can construct an existential type correctness proof even when the program is unknown, thereby synthesizing the program while constructing the proof. This is possible because each typing rule restricts a part of the program structure, making the structure of the proof sufficient to determine the entire program.

For example, in \autoref{figure:sample-type-tree}, the \rulename{T-App} rule restricts the top-level operator of the program to be a function application, the \rulename{T-Abs} rule restricts the function term to be a lambda abstraction, and the application of \rulename{T-Var} defines that the body of the lambda abstraction is a variable. 
In other words, \textbf{typing rules inherently serve a program synthesis role}. This observation allows us to construct a synthesis system directly grounded in typing rules, enabling simultaneous program synthesis and type reasoning.

We illustrate the synthesis process of \mainname in detail as shown in \autoref{tab:synthesis-process}, where we use \sk{purple} to distinguish the unknown variables (i.e., placeholders to be filled in) in each goal, which will be instantiated later during the synthesis.
For the $\stlc$ language, \mainname begins by specifying the synthesis goal as $\stlctype{\tcode{empty}}{\sk{p}}{\sk{t}}$. This synthesis goal is a typing judgment with unknown variables, and it corresponds to constructing an existential proof of $\exists \sk{p}, \sk{t}.\; \stlctype{\tcode{empty}}{\sk{p}}{\sk{t}}$: the unknown variables in the typing judgment are exactly the existential variables to be witnessed in the proof. The initial context $\tcode{empty}$ indicates that no variables have been bound at the beginning.

During synthesis, the system incrementally applies typing rules to construct the proof, and fills in the values of $\sk{p}$ and $\sk{t}$ correspondingly. For other programming languages with different type systems, the synthesis goal can be adapted accordingly. 

\begin{table}[t]
\centering
\caption{Synthesis Process of \mainname for program in \autoref{figure:sample-task}}
\label{tab:synthesis-process}
\resizebox{\textwidth}{!}{%
\begin{tabular}{|c|l|l|l|l|l|}
\hline
\textbf{Step} & \textbf{Current Goal} & \textbf{Rule Applied} & \textbf{Substitution} & \textbf{New Subgoals} & \textbf{Current Program} \\
\hline
\multirow{2}{*}{\textbf{1}} & \multirow{2}{*}{$\stlctype{\tcode{empty}}{\sk{p}}{\sk{t}}$} & \multirow{2}{*}{\rulename{T-App}} & $\sigma_1=\{ \sk{\Gamma_1} \mapsto \tcode{empty},$ & $\stlctype{\tcode{empty}}{\sk{p_1}}{\arr{\sk{t_1}}{\sk{t_2}}}$ & \multirow{2}{*}{$\app{\sk{p_1}}{\sk{p_2}}$} \\
& & & $\sk{p} \mapsto \app{\sk{p_1}}{\sk{p_2}}, \sk{t} \mapsto \sk{t_2}\}$ & $\stlctype{\tcode{empty}}{\sk{p_2}}{\sk{t_1}}$ & \\
\hline
\multirow{2}{*}{\textbf{2}} & \multirow{2}{*}{$\stlctype{\tcode{empty}}{\sk{p_1}}{\arr{\sk{t_1}}{\sk{t_2}}}$} & \multirow{2}{*}{\rulename{T-Abs}} & $\sigma_2 = \{ \sk{\Gamma_2} \mapsto \tcode{empty},$ & \multirow{2}{*}{$\stlctype{\snoc{\tcode{empty}}{\sk{x_1}}{\sk{t_3}}}{\sk{p_3}}{\sk{t_4}}$} & \multirow{2}{*}{$\app{(\abs{\sk{x_1}}{\sk{t_3}}{\sk{p_3}})}{\sk{p_2}}$} \\
& & & $\sk{p_1} \mapsto \abs{\sk{x_1}}{\sk{t_3}}{\sk{p_3}}, \ldots\}$ & & \\
\hline
\multirow{3}{*}{\textbf{3}} & \multirow{3}{*}{$\stlctype{\snoc{\tcode{empty}}{\sk{x_1}}{\sk{t_3}}}{\sk{p_3}}{\sk{t_4}}$} & \multirow{3}{*}{\rulename{T-Var}} & $\sigma_3=\{ \sk{\Gamma_3} \mapsto (\snoc{\tcode{empty}}{\sk{x_1}}{\sk{t_3}}),$ & \multirow{3}{*}{None} & \multirow{3}{*}{$\app{(\abs{\tcode{x}}{\tcode{bool}}{\tcode{x}})}{\sk{p_2}}$} \\
& & & $\sk{p_3} \mapsto \vari{\sk{x_2}}, \sk{t_4} \mapsto \sk{t_5}\}$ & & \\
& & & $\sigma_4 = \{ \sk{x_1} \mapsto \tcode{x}, \sk{x_2} \mapsto \tcode{x}, \ldots\}$ & & \\
\hline
\multirow{2}{*}{\textbf{4}} & \multirow{2}{*}{$\stlctype{\tcode{empty}}{\sk{p_2}}{\tcode{bool}}$} & \multirow{2}{*}{\rulename{...}} & \multirow{2}{*}{$\sigma_6=\{\sk{p_2}\mapsto\tcode{true}, \ldots\}$} & \multirow{2}{*}{None } & \multirow{2}{*}{$\app{(\abs{\tcode{x}}{\tcode{bool}}{\tcode{x}})}{\tcode{true}}$} \\
& & & & & \\
\hline
\end{tabular}%
}
\end{table}

\smallskip \noindent \textbf{Step 1: applying \rulename{T-App}.}
Given the initial synthesis goal $\stlctype{\tcode{empty}}{\sk{p}}{\sk{t}}$, the LM must select a typing rule to apply. Assume the rule \rulename{T-App} is selected. The system then verifies that this application is admissible for the current goal, applies the rule, and generates the corresponding subgoals (one for each premise). More concretely, the system performs the following steps in order:
\begin{enumerate}
    \item First, \mainname $\alpha$-renames the rule by replacing all variables in the rule with fresh ones to avoid name conflicts. Under this renaming, \rulename{T-App} takes the following form.
    $$\frac{\stlctype{\sk{\Gamma_1}}{\sk{p_1}}{\arr{\sk{t_1}}{\sk{t_2}}} \quad \stlctype{\sk{\Gamma_1}}{\sk{p_2}}{\sk{t_1}}}{\stlctype{\sk{\Gamma_1}}{\app{\sk{p_1}}{\sk{p_2}}}{\sk{t_2}}} $$
    
    \item Subsequently, \mainname attempts to unify the rule's conclusion judgment with the current synthesis goal $\stlctype{\tcode{empty}}{\sk{p}}{\sk{t}}$, accumulating the induced constraints as the equations shown below. If unification fails, the rule application is considered invalid and discarded.
    $$
    \sk{\Gamma_1} = \tcode{empty} \ \ \wedge\ \ \app{\sk{p_1}}{\sk{p_2}} = \sk{p}  \ \ \wedge \ \  \sk{t_2} = \sk{t}
    $$

    \item \label{action3}Thereafter, \mainname attempts to resolve these constraints via unification, yielding a substitution that maps variables to their values. If unification fails, the rule application is rejected.
    $$\sigma_1=\big\{ \sk{\Gamma_1} \mapsto \tcode{empty}, \quad \sk{p} \mapsto \app{\sk{p_1}}{\sk{p_2}}, \quad \sk{t} \mapsto \sk{t_2}\big\}$$

    \item Finally, \mainname constructs synthesis subgoals by applying the substitution to each premise of \rulename{T-App}, which has two premises corresponding to the function and the argument of the application, 
    $\stlctype{\sk{\Gamma_1}}{\sk{p_1}}{\arr{\sk{t_1}}{\sk{t_2}}}$ and $\stlctype{\sk{\Gamma_1}}{\sk{p_2}}{\sk{t_1}}$, and \mainname will generate two subgoals:
    \begin{gather*}
    \stlctype{\tcode{empty}}{\sk{p_1}}{\arr{\sk{t_1}}{\sk{t_2}}} \quad \stlctype{\tcode{empty}}{\sk{p_2}}{\sk{t_1}}
    \end{gather*}
\end{enumerate}
Each subgoal corresponds to constructing an existential sub-proof. Specifically, the first subgoal $\stlctype{\tcode{empty}}{\sk{p_1}}{\arr{\sk{t_1}}{\sk{t_2}}}$ corresponds to proving $\exists \sk{p_1}, \sk{t_1}, \sk{t_2}.\; \stlctype{\tcode{empty}}{\sk{p_1}}{\arr{\sk{t_1}}{\sk{t_2}}}$, and the second subgoal corresponds to proving $\exists \sk{p_2}.\; \stlctype{\tcode{empty}}{\sk{p_2}}{\sk{t_1}}$. Once these sub-proofs are constructed with concrete witnesses, the substitution $\sigma_1$ determines the original existential variables: $\sk{p} = \app{\sk{p_1}}{\sk{p_2}}$ and $\sk{t} = \sk{t_2}$. \mainname proceeds to discharge the two subgoals recursively before closing the parent goal.

\smallskip \noindent \textbf{Step 2: applying \rulename{T-Abs}.}
\mainname applies the same procedure to the first subgoal $\stlctype{\tcode{empty}}{\sk{p_1}}{\arr{\sk{t_1}}{\sk{t_2}}}$. 
The LM will select the \rulename{T-Abs} with a high probability as this goal requires an arrow type.
\mainname then $\alpha$-renames the \rulename{T-Abs} rule:
$$
\frac{
    \stlctype{\snoc{\sk{\Gamma_2}}{\sk{x_1}}{\sk{t_3}}}{\sk{p_3}}{\sk{t_4}}
}{
    \stlctype{\sk{\Gamma_2}}{\left(\abs{\sk{x_1}}{\sk{t_3}}{\sk{p_3}}\right)}{\arr{\sk{t_3}}{\sk{t_4}}}
}
$$

By performing the remaining three actions as detailed in Step 1, \mainname again obtains a substitution and generates a new subgoal as follows.
$$\begin{array}{ll}
\textrm{Substitution: } & \sigma_2 = \big\{ \sk{\Gamma_2} \mapsto \tcode{empty}, \quad \sk{p_1} \mapsto \abs{\sk{x_1}}{\sk{t_3}}{\sk{p_3}}, \quad \sk{t_1} \mapsto \sk{t_3}, \quad \sk{t_2} \mapsto \sk{t_4}\big\}\\
\textrm{Subgoal: }& \stlctype{\snoc{\tcode{empty}}{\sk{x_1}}{\sk{t_3}}}{\sk{p_3}}{\sk{t_4}}
\end{array}$$

\smallskip \noindent \textbf{Step 3: applying \rulename{T-Var}.}
Next, \mainname handles the newly generated subgoal $\stlctype{\snoc{\tcode{empty}}{\sk{x_1}}{\sk{t_3}}}{\sk{p_3}}{\sk{t_4}}$, assuming the LM selects the \rulename{T-Var} rule. \mainname again $\alpha$-renames the rule and then obtains the substitution as shown below.
$$\begin{array}{ll}
\textrm{Rule: } & \displaystyle\frac{\checkBinding{\sk{\Gamma_3}}{\sk{x_2}}{\sk{t_5}}}{\stlctype{\sk{\Gamma_3}}{\vari{\sk{x_2}}}{\sk{t_5}}}\\[0.8em]
\textrm{Substitution: } & \sigma_3=\big\{ \sk{\Gamma_3} \mapsto (\snoc{\tcode{empty}}{\sk{x_1}}{\sk{t_3}}), \quad \sk{p_3} \mapsto \vari{\sk{x_2}}, \quad \sk{t_4} \mapsto \sk{t_5}\big\}
\end{array}$$

\rulename{T-Var} differs from previous rules in that its premise involves the membership predicate $\in$, which checks whether a variable binding exists in the context.
Rather than generating a new subgoal, \mainname directly invokes the predicate $\in$ to assess the validity of its arguments.
However, there are still unknown variables in both the premise and the synthesis goal even after applying the substitution $\sigma_3$:
$$\begin{array}{ll}
\textrm{Premise: } & \checkBinding{(\snoc{\tcode{empty}}{\sk{x_1}}{\sk{t_3}})}{\sk{x_2}}{\sk{t_5}}\\
\textrm{Goal: } & \stlctype{\snoc{\tcode{empty}}{\sk{x_1}}{\sk{t_3}}}{\sk{x_2}}{\sk{t_5}}
\end{array}$$

The unknown variables in the premise prevent \mainname from evaluating the predicate. 
To address this issue, \mainname will query the LM to provide an assignment to all such variables, such as follows:
\begin{gather}
\sigma_4 = \big\{ \sk{x_1} \mapsto \tcode{x},\quad \sk{x_2} \mapsto \tcode{x},\quad \sk{t_3} \mapsto \tcode{bool},\quad \sk{t_5} \mapsto \tcode{bool} \big\} \label{formula:sig4}
\end{gather}

By further applying $\sigma_4$, the premise becomes $\checkBinding{(\snoc{\tcode{empty}}{\tcode{x}}{\tcode{bool}})}{\tcode{x}}{\tcode{bool}}$, which is completely known; then, \mainname can evaluate the predicate and confirms that it is satisfied.
At this time, no new subgoal is generated, so the synthesis process for this branch is complete, where the result is the composition of $\sigma_3$ and $\sigma_4$ (i.e., applying both substitutions in sequence):
\[
\sigma_4 \sigma_3 = \big\{
\sk{\Gamma_3} \mapsto \snoc{\tcode{empty}}{\tcode{x}}{\tcode{bool}},\ 
\sk{p_3} \mapsto \tcode{x},\ 
\sk{t_4} \mapsto \tcode{bool},\ 
\sk{x_1} \mapsto \tcode{x},\ 
\sk{x_2} \mapsto \tcode{x},\ 
\sk{t_3} \mapsto \tcode{bool},\ 
\sk{t_5} \mapsto \tcode{bool}
\big\}
\]

\smallskip \noindent \textbf{Step 4: returning from subgoals.}
After finishing a subgoal, \mainname will apply the obtained substitution to all remaining subgoals to synchronize the progress; and once all subgoals are closed, \mainname will compose all obtained substitutions as the final substitution for the synthesis goal.

In this example, the substitution $\sigma_4\sigma_3$ obtained in Step 3 is further composed with the substitution $\sigma_2$ from Step 2, and the resulting composition $\sigma_5 = \sigma_4\sigma_3\sigma_2$ is returned as the result of the subgoal $\stlctype{\tcode{empty}}{\sk{p_1}}{\arr{\sk{t_1}}{\sk{t_2}}}$ in Step 2. This yields a concretized and verified typing judgment, $\stlctype{\tcode{empty}}{\abs{\tcode{x}}{\tcode{bool}}{\tcode{x}}}{\arr{\tcode{bool}}{\tcode{bool}}}$.
Since there are two subgoals generated in Step 1, \mainname applies the composed substitution $\sigma_5$ to the second subgoal, $\stlctype{\tcode{empty}}{\sk{p_2}}{\sk{t_1}}$, yielding
\[
\stlctype{\tcode{empty}}{\sk{p_2}}{\tcode{bool}}.
\]
\mainname then applies the same procedure to this subgoal, producing the substitution 
$\sigma_6=\{\sk{p_2}\mapsto\tcode{true}\ldots\}$. 
This new substitution is then composed with the previous $\sigma_1$ and $\sigma_5$ to yield $\sigma_7=\sigma_6\sigma_5\sigma_1$, which is applied to the original
synthesis goal $\stlctype{\tcode{empty}}{\sk{p}}{\sk{t}}$, giving the final result:
\[
\stlctype{\tcode{empty}}{\app{(\abs{\texttt{x}}{\tcode{bool}}{\texttt{x}})}{\tcode{true}}}{\tcode{bool}}
\]
At this point, we have successfully constructed an existential proof of $\exists \sk{p}, \sk{t}.\; \stlctype{\tcode{empty}}{\sk{p}}{\sk{t}}$, with the existential variables instantiated as $\sk{p} = \app{(\abs{\texttt{x}}{\tcode{bool}}{\texttt{x}})}{\tcode{true}}$ and $\sk{t} = \tcode{bool}$. The synthesized program $\sk{p}$ is thus guaranteed to be well-typed.

Performing program synthesis via existential proof construction embodies our two key principles: \textbf{Type Explicitness} and \textbf{Context Locality}.
\begin{itemize}
    \item \textbf{Type Explicitness.} Each synthesis step corresponds to applying a typing rule, which directly mirrors the structure of the type correctness proof. As shown in the example, the sequence of rule applications (\rulename{T-App}, \rulename{T-Abs}, \rulename{T-Var}, etc.) constructs a complete proof tree, making the type reasoning explicit throughout the synthesis process.
    \item \textbf{Context Locality.} At each step, the synthesis goal is a partially instantiated typing judgment that contains the current typing context. For instance, in Step 2, the goal $\stlctype{\tcode{empty}}{\sk{p_1}}{\arr{\sk{t_1}}{\sk{t_2}}}$ explicitly provides the context $\tcode{empty}$ and indicates that the program $\sk{p_1}$ to be synthesized must have an arrow type $\arr{\sk{t_1}}{\sk{t_2}}$. This local type information guides the LM to select the \rulename{T-Abs} rule, which produces a lambda abstraction.
\end{itemize}
\subsection{Translating Typing Rules to Synthesis Rules}
In summary of \autoref{subsection:moti-representation}, \mainname extends typing rules for synthesis with the two design choices:
\begin{itemize} 
    \item When applying a rule, instead of pattern matching, \mainname introduces rule parameters as fresh variables and solves the constraints between variables via unification.
    \item Whenever an executable predicate requires the value of an unknown variable, \mainname acquires this value directly from the LM.
\end{itemize}
With these ideas, typing rules can be restated as formal synthesis rules in a mechanical way, which will be elaborated in \autoref{sec:meta}. Here we shall illustrate this process with the \textsc{T-App} rule as in \autoref{figure:s-app}.

To synthesize programs in $\stlc$, a synthesis judgment takes the form:
\[
\stlctype{\sketch{\Gamma}}{\sketch{p}}{\sketch{t}} \leadsto \sigma
\]
\begin{itemize}
\item \label{sketch-label}The left side is an incomplete typing judgment, where $\sketch{\Gamma}$, $\sketch{p}$, and $\sketch{t}$ represent a (possibly) incomplete context, program, and type, respectively. For example, $\sketch{t}$ can be an unknown variable $\sk{t}$, a partially known type $\tcode{bool} \rightarrow \sk{t}$, or a fully known type $\tcode{bool} \rightarrow \tcode{bool}$.
\item The right side is an assignment $\sigma$ to all unknown variables on the left side.
\end{itemize}

This judgment means that \textbf{the left-hand side typing judgment can be properly completed by the right-hand side assignment}; in other words, after applying $\sigma$ as a substitution, $\stlctype{\sigma(\sketch{\Gamma})}{\sigma(\sketch{p})}{\sigma(\sketch{t})}$ will be a valid typing judgment in the type system of $\stlc$. 
From the perspective of proof construction, the synthesis judgment corresponds to constructing an existential proof: given a proof goal with unknown variables, the synthesis process finds witnesses (i.e., concrete values for these variables) and proves the validity of the goal.  

\begin{figure*}
\centering 
\includegraphics[width=\textwidth]{assets/S-App.png}
\caption{The typing rule \textsc{T-App} and its corresponding synthesis rule \textsc{S-App}, where $\textsf{FV}(\sketch{\Gamma}, \sketch{p}, \sketch{T})$ denotes the unknown variables in the incomplete typing judgment. \textsc{S-App} will return assignments only for these variables.} \label{figure:s-app}
\end{figure*}

As marked in \autoref{figure:s-app}, the premises of rule \textsc{S-App} can be divided into three parts.
\begin{itemize}
    \item Part 1 introduces a fresh variable for each parameter in the typing rule \textsc{T-App}, unifies the conclusion of \textsc{T-App} with the current incomplete typing judgment (i.e., $\stlctype{\sk{\Gamma}}{\sk{p_1}~\sk{p_2}}{\sk{t}}$ with $\stlctype{\sketch{\Gamma}}{\sketch{p}}{\sketch{t}}$) and thus extracts the relation between variables as a substitution $\sigma$.
    \item Parts 2 and 3 are introduced regarding the two premises in the \textsc{T-App} rule.
    Part 2 completes the first premise $\stlctype{\Gamma}{p_1}{t_1 \rightarrow t_2}$, resulting in an assignment $\sigma_1$.
    Then, Part 3 combines the known information by composing $\sigma$ and $\sigma_1$ into $\sigma'$, and further completes the second premise $\stlctype{\Gamma}{p_2}{t_1}$.
    In general, this process repeats for all premises until each is completed, and any failure to discharge a premise causes the entire rule application to fail.
\end{itemize}

To get the result, \mainname composes the two obtained substitutions $\sigma_2$ and $\sigma'$.
Note that this composition involves not only variables in $\stlctype{\sketch{\Gamma}}{\sketch{p}}{\sketch{t}}$ but also local variables such as $\sk{\Gamma}$ and $\sk{p_1}$. 
Therefore, \textsc{S-App} will exclude these local variables and return only values of interest.

This translation establishes a direct correspondence between typing rules and synthesis rules. The core insight is that when the program is unknown, we can apply synthesis rules to construct an existential proof, which simultaneously discovers the program and proves its type correctness. The resulting existential proof has the same structure as the type correctness proof of the discovered program (like the one depicted in \autoref{figure:sample-type-tree}).

\subsection{Interacting with the Language Model}

Having defined the synthesis rules, we now describe how the LM interacts with the synthesis system.

\noindent \textbf{LM Queries.} The synthesis rules form a search system for program synthesis, where the system issues queries to the LM and recursively discharges synthesis subgoals based on the decisions the LM makes. 
As discussed in \autoref{subsection:moti-representation}, at each synthesis step, the system raises two types of queries: to select an appropriate synthesis rule, and to generate variable assignments for predicates. In \mainname, we resolve both queries by prompting the LM with the following information:
\begin{enumerate}
    \item the natural language prompt of the overall synthesis task
    \item the sequence of synthesis decisions generated so far
    \item the current synthesis goal
\end{enumerate}

To simplify the interaction, we do not distinguish the two types of queries and assume that the LM can learn to infer what information is needed in each step from the context.
\autoref{fig:synthesis-sequence} illustrates the expected decision sequence for our sample task.
Let us consider the first three steps:
\begin{itemize}
\item In Step 1, the LM receives the task prompt, an empty decision sequence, and the overall synthesis goal $\stlctype{\tcode{empty}}{\sk{p}}{\sk{t}}$. It predicts the synthesis rule \textsc{S-App} with no other output, as no variable assignment is needed in this rule application.
\item In Step 2, the case is similar. The LM predicts only \textsc{S-Abs} with no other assignment.
\item In Step 3, the decision sequence becomes $[\textsc{S-App}, \textsc{S-Abs}]$ and the subgoal becomes $\stlctype{\snoc{\tcode{empty}}{\sk{x_1}}{\sk{t_3}}}{\sk{p_3}}{\sk{t_4}}$. Here, the LM predicts the \textsc{S-Var} rule, and meanwhile, generates all four assignments (\autoref{formula:sig4}) required by this rule.
\end{itemize}

\begin{figure}[tb]
\centering
\begin{tikzpicture}[x=1.2cm, y=0.8cm]
    \tikzset{
        rule/.style={
            draw=gray!80, fill=gray!15,
            minimum height=5mm, minimum width=10mm,
            rounded corners=1.5pt, font=\small,
            text depth=0.25ex, text height=1.6ex
        },
        assign/.style={
            draw=blue!40, fill=blue!8,
            minimum height=5mm, minimum width=6mm,
            rounded corners=1.5pt, font=\small,
            text depth=0.25ex, text height=1.6ex
        },
    }

    \node[rule] at (0,0) {S-App};
    \node[rule] at (1,0) {S-Abs};
    \node[rule] (a0) at (2,0) {S-Var};

    \node[assign] (a1) at (2.80,0) {\tcode{x}};    
    \node[assign] (a2) at (3.45,0) {\tcode{x}};    
    \node[assign] (a3) at (4.20,0) {\tcode{bool}}; 
    \node[assign] (a4) at (5.05,0) {\tcode{bool}}; 

    \node[rule] at (6.0,0) {$\boldsymbol{\cdots}$};

    \node[above] at (0,0.3) {\tiny Step 1};
    \node[above] at (1,0.3) {\tiny Step 2};
    \node[above] at (3.4,0.3) {\tiny Step 3};
    \node[above] at (6.0,0.3) {\tiny Step 4};

    \node[
        draw=gray!60,
        dashed,
        rounded corners=2pt,
        fit={(a0) (a4)},
        inner sep=2pt
    ] {};
\end{tikzpicture}

\setlength{\fboxsep}{1pt}
\caption{Synthesis decision sequence for $\app{(\abs{\tcode{x}}{\tcode{bool}}{\tcode{x}})}{\tcode{true}}$: 
\colorbox{gray!15}{\textsf{synthesis rules}} and 
\colorbox{blue!8}{\textsf{variable assignments}}.}
\label{fig:synthesis-sequence}
\end{figure}

As shown in \autoref{fig:synthesis-sequence}, the synthesis rules and variable assignments together form a synthesis decision sequence, which serves as a novel program representation in \mainname. 
Unlike traditional representations that treat programs as flat sequences of syntactic tokens (e.g., [$\lambda$, $x$, :, \tcode{bool}, $\dots$]), our representation is structured around the type correctness proof. Each token corresponds to a step in the proof construction process, directly capturing the hierarchical structure of type reasoning (as in \autoref{figure:sample-type-tree}). This enables the LM to learn the type system explicitly rather than attempting to reconstruct it from flat token sequences. Since this sequence represents a program synthesis process that determines both the resulting program and its type correctness proof, this representation embodies our \textbf{Derivation Vicinality} principle.

To support this proof-guided synthesis, we design a novel model architecture with dual-encoding: the encoder processes both the static natural language specification and the dynamically evolving synthesis goal at each step, providing rich typing context to the decoder. Details of this architecture are presented in \autoref{sec:implementation}.

\smallskip \noindent \textbf{Training.} Since our representation differs from traditional token sequences, the LM requires fine-tuning to generate synthesis decision sequences.
Our representation also offers \textbf{Data Usability}, as it is compatible with existing code generation datasets for the target language.
Given a corpus of programs, we construct their type correctness proofs via the type checker, convert the proofs into synthesis decision sequences using the correspondence between typing rules and synthesis rules, and obtain a training set in \mainname's representation.

\section{Translation from Typing Rules to Synthesis Rules \label{sec:meta}}

In this section, we introduce the necessary preliminaries on constrained Horn clauses and unification, followed by the user-provided specifications of the object language, including terms, typing rules, and describe how typing rules are systematically translated into synthesis rules.

\subsection{Preliminaries}
\subsubsection{Constrained Horn Clauses}
\label{chc-intro}
Constrained Horn clauses (CHCs) are a fragment of first-order logic with applications to program verification and synthesis~\cite{DBLP:conf/synasc/GurfinkelB19}. A constrained Horn clause, in this setting, is a universally quantified implication of the canonical form:
\begin{center}
\begin{prooftree}[rule margin=0.5ex, separation=1em]
\hypo{P_1(\overline{x}_1)}
\hypo{P_2(\overline{x}_2)}
\hypo{\ldots}
\hypo{P_n(\overline{x}_n)}
\hypo{\phi(\overline{y})}
\infer5[(CHC)]{P(\overline{x}_0)}
\end{prooftree}
\captionof{figure}{General form of a constrained Horn clause.}
\label{T-Rule}
\end{center}
Each clause consists of zero or more premises, where each premise is a judgment formed by applying an uninterpreted predicate $P_i$ to terms $\overline{x}_i$. It also includes a constraint $\phi(\overline{y})$, a computable function over terms $\overline{y}$, and concludes with a single judgment $P(\overline{x}_0)$.
This rule states that the judgment $P(\overline{x}_0)$ can be derived whenever all the premises $P_i(\overline{x}_i)$ are derivable and the constraint $\phi(\overline{y})$ holds. Without loss of generality, we can assume that every variable of terms in $\overline{y}$ occurs among a term in the premises or conclusion, namely $FV(\overline{y})\subseteq \bigcup_{i=0}^n FV(\overline{x}_i)$, which ensures that the constraints are always deterministic and checkable given the premises and conclusion.

\begin{example}
Consider the typing rule \textsc{T-App} for function application in $\stlc$:
\begin{center}
\begin{prooftree}[rule margin=1ex, separation=1.5em]
\hypo{\stlctype{\Gamma}{p_1}{t_1 \rightarrow t_2}}
\hypo{\stlctype{\Gamma}{p_2}{t_1}}
\infer2[(T-App)]{\stlctype{\Gamma}{p_1~p_2}{t_2}}
\end{prooftree}
\end{center}
In CHC terminology: the predicates $P_1$ and $P_2$ are both the typing judgment predicate $\stlctype{\cdot}{\cdot}{\cdot}$; the terms $\overline{x}_1 = (\Gamma, p_1, t_1 \rightarrow t_2)$ and $\overline{x}_2 = (\Gamma, p_2, t_1)$ are the arguments to the premises; the conclusion $P(\overline{x}_0)$ is $\stlctype{\Gamma}{p_1~p_2}{t_2}$ with $\overline{x}_0 = (\Gamma, p_1~p_2, t_2)$; and the constraint $\phi$ is implicitly $\mathit{True}$.
\end{example}

It is well known that any Turing-computable function can be encoded using Horn clauses (HCs)~\cite{DBLP:journals/bit/Tarnlund77}. CHCs extend HCs by adding constraints $\phi(\overline{y})$, which allow expressing computable conditions (e.g., arithmetic comparisons, equality checks) directly within the clause. When all constraints are interpreted as $\mathit{True}$, CHCs collapse to ordinary HCs. Consequently, CHCs are at least as expressive as HCs, and any Turing-computable function can be expressed using CHCs, which ensures the expressiveness of our framework.

\subsubsection{Unification}
Unification is a fundamental operation in logic and computer science that finds substitutions to make two expressions structurally identical. Formally, given two terms $s$ and $t$, a \emph{unifier} is a substitution $\sigma$ such that $\sigma(s) = \sigma(t)$. When a unifier exists, the two terms are said to be \emph{unifiable}. Among all possible unifiers, the \emph{most general unifier} (MGU) provides the least restrictive solution: any other unifier can be obtained by further instantiating the MGU.

\begin{example}
Consider unifying an incomplete typing judgment $\stlctype{\tcode{empty}}{\sk{p_1}}{\arr{\sk{t_1}}{\sk{t_2}}}$ with another judgment $\stlctype{\sk{\Gamma_2}}{\left(\abs{\sk{x_1}}{\sk{t_3}}{\sk{p_3}}\right)}{\arr{\sk{t_3}}{\sk{t_4}}}$. The unification process matches corresponding positions: $\tcode{empty}$ with $\sk{\Gamma_2}$, $\sk{p_1}$ with $\abs{\sk{x_1}}{\sk{t_3}}{\sk{p_3}}$, and $\arr{\sk{t_1}}{\sk{t_2}}$ with $\arr{\sk{t_3}}{\sk{t_4}}$. This yields the MGU:
\[
\sigma = \{\sk{\Gamma_2} \mapsto \tcode{empty},\quad\sk{p_1} \mapsto \abs{\sk{x_1}}{\sk{t_3}}{\sk{p_3}},\quad\sk{t_1} \mapsto \sk{t_3},\quad\sk{t_2} \mapsto \sk{t_4}\}
\]
After applying $\sigma$, both terms become identical: $\stlctype{\tcode{empty}}{\left(\abs{\sk{x_1}}{\sk{t_3}}{\sk{p_3}}\right)}{\arr{\sk{t_3}}{\sk{t_4}}}$.

Note that unification is symmetric: it can bind variables from either term. Here, $\sk{\Gamma_2}$ is bound to a concrete value $\tcode{empty}$, while $\sk{t_1}$ and $\sk{t_2}$ are bound to other variables $\sk{t_3}$ and $\sk{t_4}$.
The MGU preserves maximum generality by not over-constraining variables. For instance, the substitution $\{\sk{\Gamma_2} \mapsto \tcode{empty}, \sk{p_1} \mapsto \abs{\sk{x_1}}{\sk{t_3}}{\sk{p_3}}, \sk{t_1} \mapsto \sk{t_3}, \sk{t_2} \mapsto \sk{t_4}, \sk{t_3} \mapsto \tcode{bool}\}$ is also a valid unifier, but it is not the MGU because it unnecessarily constrains $\sk{t_3}$ to $\tcode{bool}$.
\end{example}

In our synthesis framework, unification enables typing rules to be applied by aligning the schematic parameters in a rule's conclusion with the current synthesis goal. When a typing rule is selected, unification extracts the relationship between the goal's unknowns and the rule's parameters, propagating known information and establishing constraints between unknowns.
In this paper, we restrict ourselves to \emph{first-order unification}, where variables range only over terms and not over functions or predicates. 
First-order unification is decidable and admits efficient algorithmic solutions~\cite{DBLP:journals/jacm/Robinson65}, making it suitable for our synthesis framework.

\subsection{Terms and Typing Rules}
\subsubsection{Terms}
\label{terms-intro}
We begin by defining the object language through a structured set of terms. A \emph{term} is a symbolic expression that represents programs, types, contexts, or other syntactic entities in the object language. Each term type is specified inductively: terms are constructed from constants, variables, and composite structures via constructors. For any given term type \tcode{T}, the syntax conforms to the general form:
\[\tcode{T} \,::=\, \tcode{c} \;\mid\; \tcode{v} \;\mid\; \tcode{k}(\tcode{T}_1, \ldots, \tcode{T}_n)\]
where \(\tcode{c}\) ranges over constants \(\mathcal{C}_\scode{T}\), \(\tcode{v}\) ranges over the global set of variables \(\mathcal{V}\), and \(\tcode{k}\) ranges over constructor symbols \(\mathcal{K}_\scode{T}\) of specified arity and parameter term types. 

The variable set $\mathcal{V}$ consists of symbolic names serving as placeholders to be instantiated by terms. In contrast, the sets of constants $\mathcal{C}_\scode{T}$ and constructors $\mathcal{K}_\scode{T}$ are determined by user-defined inductive type declarations. 
Built-in types such as \tcode{Int} and \tcode{String} are provided following conventional programming language syntax.

\begin{example}
Consider the following syntax for programs in $\stlc$:
\[
\tcode{Prog} ::= 
  \tcode{true} \;\mid\;
  \tcode{false} \;\mid\;
  \tcode{var}(\tcode{String}) \;\mid\;
  \tcode{app}(\tcode{Prog}, \tcode{Prog}) \;\mid\;
  \tcode{abs}(\tcode{String}, \tcode{Type}, \tcode{Prog})
\]
This yields the following instantiations of the general term pattern:
\begin{itemize}
  \item \textbf{Constants:} \(\mathcal{C}_{\tcode{Prog}} = \{\tcode{true}, \tcode{false}\}\)
  \item \textbf{Constructors:}
    \(\mathcal{K}_{\text{Prog}} = \{
      \tcode{var} : (\tcode{String}),\ 
      \tcode{app} : (\tcode{Prog}, \tcode{Prog}),\ 
      \tcode{abs} : (\tcode{String}, \tcode{Type}, \tcode{Prog})
    \}\)
\end{itemize}
Using this syntax, the program $\lambda x{:}\tcode{bool}.\, x$ is represented as the term $\tcode{abs}(\tcode{"x"}, \tcode{bool}, \tcode{var}(\tcode{"x"}))$. This term is constructed by applying the constructor $\tcode{abs}$ to three arguments: the string $\tcode{"x"}$, the type $\tcode{bool}$, and the subterm $\tcode{var}(\tcode{"x"})$.
\end{example}

\subsubsection{Categories of terms and substitutions} 
In general, terms may contain known and unknown components, so in the remainder of this paper, we do not explicitly annotate terms as in \autoref{sketch-label}. Below, we provide definitions for key concepts that will be used throughout the subsequent sections.
\begin{enumerate}
    \item \textbf{Ground terms} are terms constructed exclusively from constants and constructors, containing no variables. 
    They are fully specified expressions that require no further instantiation.
    
    \item A \textbf{substitution} is a mapping \(\sigma: \mathcal{V}_0 \to \mathcal{T}\), where \(\mathcal{V}_0\) is a subset of variables \(\mathcal{V}\), and \(\mathcal{T}\) is the set of all terms. A substitution replaces each variable \(v \in \mathcal{V}_0\) with a term \(\sigma(v) \in \mathcal{T}\).

    The \textbf{composition} of two substitutions $\sigma_i$ and $\sigma_j$, denoted $\sigma_i \sigma_j$, is defined such that for any variable $v \in \mathcal{V}$, $(\sigma_i \sigma_j)(v) = \sigma_i(\sigma_j(v))$. 
    This allows multiple substitutions to be applied sequentially.
    A substitution $\sigma$ can also operate on a list of terms $\overline{t} = [t_1, t_2, \ldots, t_n]$, yielding a new list $\sigma(\overline{t}) = [\sigma(t_1), \sigma(t_2), \ldots, \sigma(t_n)]$ where $\sigma$ is applied to each term individually.
    
    \item An \textbf{assignment} is a special case of a substitution whose codomain ranges over ground terms.
    Assignments serve as the ultimate objective of the synthesis, as we require that the typing judgments, as synthesis goals, can be instantiated with no remaining uncertainties.
\end{enumerate}

\subsubsection{Typing Rules and Type Derivation Trees}
We formalize the typing rules of the object language as CHCs for their expressiveness and standard structure, where each typing rule is provided by the user as a CHC shown in \autoref{T-Rule}.
The user should also provide at least one typing rule whose conclusion is $\welltyped{p}$, the top-level judgment asserting that program $p$ is well-typed. For example, in $\stlc$, this judgment can be defined as $\exists t.\; \stlctype{\tcode{empty}}{p}{t}$, meaning that the program $p$ is well-typed under the empty context.

A judgment $P(\overline{t})$, where $\overline{t}$ are ground terms, is \emph{derivable} from a given set of typing rules if there exists a finite derivation tree whose root is labeled with $P(\overline{t})$, each internal node corresponds to the application of a typing rule, and all constraints of typing rules are satisfied. The type derivation tree is precisely this tree structure that witnesses the derivability of a typing judgment.
From a logical perspective, a type derivation tree is a \textbf{type correctness proof}: it proves that the typing judgment at the root holds. This correspondence between derivation trees and proofs is central to our approach. Later, we will show that a synthesis derivation tree corresponds to constructing an \textbf{existential type correctness proof} when the program is unknown: the synthesis process simultaneously discovers the program and constructs a proof of its type correctness.

More formally, a \emph{type derivation tree} for a typing judgment $P(\overline{t})$ is defined as: (1) the root node is labeled with $P(\overline{t})$; (2) each node is labeled with a typing judgment $P_i(\overline{t_i})$, where all terms $\overline{t_i}$ are ground terms, and annotated with the typing rule which derives $P_i(\overline{t_i})$ under certain instantiation; (3) the children of each internal node correspond to the premises of the applied typing rule; and (4) for each applied typing rule, all associated constraints are satisfied under the instantiation.

For example, \autoref{figure:sample-type-tree} shows a type derivation tree for the program $(\lambda x{:}\tcode{bool}.\, x)\; \tcode{true}$, which proves that this program has type $\tcode{bool}$ under the empty context. Consider the following subtree for the subterm $\lambda x{:}\tcode{bool}.\, x$:
\begin{center}
\begin{prooftree}[rule margin=1ex, separation=1em]
\hypo{\checkBinding{x{:}\tcode{bool}}{x}{\tcode{bool}}}
\infer1[(T-Var)]{\stlctype{x{:}\tcode{bool}}{x}{\tcode{bool}}}
\infer1[(T-Abs)]{\stlctype{\tcode{empty}}{\lambda x{:}\tcode{bool}.\, x}{\tcode{bool} \rightarrow \tcode{bool}}}
\end{prooftree}
\end{center}
In this subtree, the root node is labeled with the typing judgment $\stlctype{\tcode{empty}}{\lambda x{:}\tcode{bool}.\, x}{\tcode{bool} \rightarrow \tcode{bool}}$ and annotated with the typing rule \textsc{T-Abs}. Its child node is labeled with $\stlctype{x{:}\tcode{bool}}{x}{\tcode{bool}}$ and annotated with \textsc{T-Var}, which has a constraint $\checkBinding{x{:}\tcode{bool}}{x}{\tcode{bool}}$ that checks whether the variable $x$ with type $\tcode{bool}$ exists in the context.

The existence of a type derivation tree for $\welltyped{p}$ establishes that the program $p$ is well-typed. To enable this verification, there should be a type checker in the object language toolchain that constructs the type derivation tree for $\welltyped{p}$ given any well-typed program $p$.

\subsection{Synthesis rules}
\label{sec:meta-translation}

\subsubsection{Synthesis Judgments and Synthesis Goals}
As outlined in the overview, when some components of a typing judgment are unknown (e.g., the program to be synthesized), typing rules can be reformulated as synthesis rules to handle such uncertainties. 
A typing judgment $R(\overline{t})$ is converted into a \emph{synthesis judgment} $R(\overline{t}) \leadsto \theta$, where $R(\overline{t})$ denotes the \emph{synthesis goal}, and $\theta$ is the assignment synthesized for the goal. 
Semantically, $R(\overline{t}) \leadsto \theta$ is derivable iff, after applying $\theta$ to $\overline{t}$, all terms in $\theta(\overline{t})$ are ground and the typing judgment $R(\theta(\overline{t}))$ is derivable.

\subsubsection{Translation from Typing Rules into Synthesis Rules}
\begin{center}
\begin{prooftree}[rule margin=1ex, separation=1.5em]
\hypo{\unify{\overline{x}=\overline{x}_0}=\sigma}
\infer[no rule]1{\acquire{\sigma(\overline{x}_f)} = \sigma_0}
\hypo{P_1(\sigma_0\sigma(\overline{x}_1)) \leadsto \sigma_1}
\infer[no rule]1{P_2(\sigma_1\sigma_0\sigma(\overline{x}_2)) \leadsto \sigma_2}
\ellipsis{}{P_n(\sigma_{n-1}\ldots\sigma_0\sigma(\overline{x}_n))\leadsto \sigma_n}
\hypo{\ensure{\phi(\sigma_n\ldots\sigma_0\sigma(\overline{y}))}}
\infer3[(S-Rule)]{P(\overline{x}) \leadsto \sigma_n\ldots\sigma_0\sigma|_{\mvhzc(\overline{x})}}
\end{prooftree}
\captionof{figure}{Normal form of the synthesis rules}
\label{S-Rule}
\end{center}

A typing rule in \autoref{T-Rule} can be translated into a synthesis rule in \autoref{S-Rule} in the following manner:
\begin{enumerate}
  \item \textbf{Conclusion translation.} The conclusion $P(\overline{x}_0)$ of the typing rule is replaced by a synthesis judgment $P(\overline{x}) \leadsto \sigma_n \ldots \sigma_0 \sigma|_{\mvhzc(\overline{x})}$ (with the same predicate $P$), where $\overline{x}$ is a sequence of newly introduced variables and can be instantiated with respect to the synthesis goal. $\mvhzc(\overline{x})$ denotes the set of variables in $\overline{x}$, which are the only variables that need to be synthesized. 
  $\sigma_n \ldots \sigma_0 \sigma|_{\mvhzc(\overline{x})}$ indicates that the final assignment is restricted to only those variables in $\overline{x}$.

  \item \textbf{Adding unification.} The unification step is added at the beginning of the synthesis rule, which unifies the variables $\overline{x}$ in the synthesis goal with the variables $\overline{x}_0$ in the conclusion of the typing judgment, yielding a substitution $\sigma$ that is the MGU of the unification problem. 
  
  \item \textbf{Adding Free variables acquisition.} The variables appearing in $\overline{x}_0$ but not in any of the $\overline{x}_i$ are collected into $\overline{x}_f$, defined as $\overline{x}_f = \text{list}(\mvhzc(\overline{x}_0) \setminus \bigcup_{i=1}^n \mvhzc(\overline{x}_i))$. We refer to $\overline{x}_f$ as the \emph{free variables} of the synthesis rule: they may not be bound by any premise even after unification, and thus should be acquired from external sources, yielding the assignment $\sigma_0$.
  
  \item \textbf{Premises translation.} Each premise $P_i(\overline{x}_i)$ is replaced by $P_i(\sigma_{i-1} \cdots \sigma_0 \sigma(\overline{x}_i)) \leadsto \sigma_i$, where $\sigma_{i-1} \cdots \sigma_0 \sigma(\overline{x}_i)$ instantiates the original terms $\overline{x}_i$ with substitutions from previous steps, and $\sigma_i$ is the assignment synthesized for the subgoal $P_i(\sigma_{i-1} \cdots \sigma_0 \sigma(\overline{x}_i))$ ($i\geq1$).
  
  \item \textbf{Constraint translation.} The constraint $\phi(\overline{y})$ is translated into $\ensure{\phi(\sigma_n \ldots \sigma_0 \sigma(\overline{y}))}$, which checks whether it holds under the assignment $\sigma_n \ldots \sigma_0 \sigma$.    
\end{enumerate}

\begin{example}
For example, if we have the following typing rule:
\begin{center}
\begin{prooftree}[rule margin=1ex, separation=1em]
\hypo{P_1(x_1)}
\hypo{P_2(x_2, c(x_1,x_3))}
\hypo{\phi(x_0, x_3)}
\infer3[(T-Rule1)]{P(x_0, c(x_1,x_2))}       
\end{prooftree}
\end{center}
We can translate it into the following synthesis rule:
\begin{center}
\begin{prooftree}[rule margin=1ex, separation=0.7em]
\hypo{\unify{s_0=x_0,\, s_1=c(x_1, x_2)}=\sigma}
\infer[no rule]1{\acquire{\sigma(x_0)} = \sigma_0}
\hypo{P_1(\sigma_0\sigma(x_1)) \leadsto \sigma_1}
\infer[no rule]1{P_2(\sigma_1\sigma_0\sigma(x_2,\, c(x_1,x_3))) \leadsto \sigma_2}
\hypo{\ensure{\phi(\sigma_2\sigma_1\sigma_0\sigma(x_0, x_3))}}
\infer3[(S-Rule1)]{P(s_0, s_1) \leadsto \sigma_2\sigma_1\sigma_0\sigma|_{\mvhzc(s_0, s_1)}}
\end{prooftree}
\end{center}
\end{example}

\begin{lemma}
\label{rule-bijection}
There is a one-to-one correspondence between typing rules and synthesis rules.
\end{lemma}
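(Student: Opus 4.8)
The plan is to exhibit the map $\varphi$ explicitly and show it is a well-defined bijection by constructing an inverse. The translation procedure described in Section~\ref{sec:meta-translation} already defines a function from typing rules to synthesis rules: given a typing rule in the normal form of \autoref{T-Rule}, the five-step recipe (conclusion translation, adding unification, free-variable acquisition, premise translation, and constraint translation) deterministically produces a synthesis rule in the normal form of \autoref{S-Rule}. I would first argue that this procedure is \emph{total} and \emph{deterministic}: each step is a syntactic transformation with no free choices, so $\varphi$ is a genuine function. The key observation is that both the source and target live in rigidly fixed normal forms, so the correspondence is really a correspondence between the \emph{data} that parametrize these forms, namely the predicate $P$, the premise predicates $P_1,\dots,P_n$ with their argument tuples $\overline{x}_1,\dots,\overline{x}_n$, the conclusion arguments $\overline{x}_0$, and the constraint $\phi$ over $\overline{y}$.

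Next I would define the inverse map $\varphi^{-1}$ by a syntactic erasure that discards exactly the synthesis-specific scaffolding. Concretely, given a synthesis rule in the form of \autoref{S-Rule}, one recovers the typing rule by: deleting the $\unify{\cdot}$ and $\acquire{\cdot}$ premises, stripping every $\leadsto \sigma_i$ annotation and every accumulated substitution prefix $\sigma_{i-1}\cdots\sigma_0\sigma$ from the premises to recover each $P_i(\overline{x}_i)$, removing the substitution from the constraint to recover $\phi(\overline{y})$, and replacing the synthesis conclusion $P(\overline{x}) \leadsto \sigma_n\cdots\sigma_0\sigma|_{\mvhzc(\overline{x})}$ with the plain conclusion $P(\overline{x}_0)$. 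The main point to verify is that this erasure is well-defined, i.e.\ that the underlying typing-rule data $(P, \overline{x}_0, \{P_i(\overline{x}_i)\}_i, \phi(\overline{y}))$ can be read off unambiguously from the synthesis rule. Because the substitution prefixes $\sigma_{i-1}\cdots\sigma_0\sigma$ are \emph{formal} symbols applied to the original argument tuples rather than computed objects, the original $\overline{x}_i$ are syntactically present and recoverable; likewise $\overline{x}_0$ is recoverable since $\overline{x}$ is the fresh variable sequence and the relation between $\overline{x}$ and $\overline{x}_0$ is exactly what the unification premise records.

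I would then check the two round-trip identities $\varphi^{-1}\circ\varphi = \mathrm{id}$ and $\varphi\circ\varphi^{-1} = \mathrm{id}$. The first is immediate: applying the erasure to a rule produced by the five-step recipe returns the original typing-rule components, since the recipe only \emph{adds} the scaffolding that erasure removes and copies the $P_i$, $\overline{x}_i$, and $\phi$ verbatim into the formal prefixes. For the second direction I would observe that every synthesis rule in the normal form of \autoref{S-Rule} is the image under $\varphi$ of its erasure, because reapplying the recipe to the erased data reconstructs exactly the unification premise, the acquisition premise with $\overline{x}_f = \mathrm{list}(\mvhzc(\overline{x}_0)\setminus\bigcup_i \mvhzc(\overline{x}_i))$, the substitution-prefixed premises, and the restricted conclusion assignment. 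Here one must confirm that the free-variable set $\overline{x}_f$ and the restriction $|_{\mvhzc(\overline{x})}$ are themselves determined by the recovered data, which they are by definition.

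The main obstacle is a bookkeeping subtlety rather than a deep mathematical difficulty: one must be careful that $\varphi$ is a bijection \emph{up to the appropriate notion of equality on rules}. Since the recipe introduces a fresh variable sequence $\overline{x}$ in the conclusion and fresh names through $\alpha$-renaming, rules that differ only by renaming of bound schematic variables should be identified. I would therefore state the bijection at the level of rules taken modulo $\alpha$-equivalence of their schematic parameters, and verify that both $\varphi$ and $\varphi^{-1}$ respect this equivalence (i.e.\ they send $\alpha$-equivalent inputs to $\alpha$-equivalent outputs). Once this convention is fixed, the ordering of the accumulated substitutions $\sigma_i$ and the indexing of premises are rigid, so no further ambiguity arises and the round-trip identities close.
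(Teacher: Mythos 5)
Your proposal is correct and follows essentially the same route as the paper's own proof: the paper likewise establishes the bijection by exhibiting the syntactic erasure map $\psi$ (deleting the unification and acquisition premises, the substitutions, and the $\leadsto$ annotations to recover the typing rule) and noting that synthesis rules arise only via the translation, so that $\psi \circ \varphi = \varphi \circ \psi = \mathrm{id}$. Your additional care in checking both round-trip identities and in working modulo $\alpha$-equivalence of schematic parameters is a sound refinement of a point the paper leaves implicit, but it does not constitute a different argument.
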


\begin{proof}
The complete proof can be found in Appendix~\ref{sec:appendix-lemma1}.
\end{proof}

\section{Synthesis System}
\label{sec:approach}
\autoref{fig:classfy-problem} presents the architecture of \mainname's system, which constructs synthesis derivation trees from which programs are extracted. The construction is achieved in \autoref{alg:synthtree} through a series of queries (\autoref{sec:synthtree-construction-query}), which can be resolved through two approaches corresponding to the dual requirements of our system: code generation (\autoref{sec:query-resolution-lm}) and training data preparation (\autoref{sec:translation}).

The central claim of this section is that constructing a synthesis derivation tree, which is the core of \mainname's framework, is equivalent to constructing an existential type correctness proof. A key insight underlying this equivalence is the isomorphism between type derivation trees and synthesis derivation trees (\autoref{sec:isomorphism}). 
This isomorphism ensures that the constructed synthesis derivation trees maintain the logical flow and dependencies inherent in the type-level reasoning, so that we can construct synthesis derivation trees to represent well-typed programs (\autoref{sec:soundness-completeness}).

\begin{figure}[t]
\centering
\includegraphics[width=0.75\textwidth]{assets/methods1.png}
\caption{Meta-system architecture of \mainname.}
\label{fig:classfy-problem}
\end{figure}

\subsection{Construction of Synthesis Derivation Trees}
In this subsection, we will first describe the application of synthesis rules (\autoref{sec:synthesis-rule-application}), then introduce the concept of synthesis derivation trees (\autoref{sec:synthesis-derivation-tree}), and finally present the algorithmic construction of synthesis derivation trees, where it is realized by a sequence of query resolutions (\autoref{sec:synthtree-construction-query}).
\subsubsection{Synthesis Rule Application}
\label{sec:synthesis-rule-application}
Each synthesis rule as in \autoref{S-Rule} specifies how to start from a partially instantiated typing judgment $P(\overline{x})$, search for an assignment $\sigma$ to all variables in $\overline{x}$, then derive $P(\sigma(\overline{x}))$ using the corresponding typing rule. This procedure can be detailed as follows:
\begin{enumerate}
    \item \textbf{Unification Step:} First unify the terms in the synthesis goal $P(\overline{x})$ with the terms in the conclusion of the typing rule, yielding the MGU $\sigma$. Once the unification is done, the other terms ($\overline{x}_i$, $\overline{x}_f$, and $\overline{y}$) should also be instantiated with $\sigma$ to match the synthesis goal. Or the unification fails, in which case the synthesis goal is not derivable using this synthesis rule.
    
    
    \item \textbf{Acquisition Step:} Next acquire the assignment for all free variables $\overline{x}_f$ in the synthesis rules (after instantiated by $\sigma$) from external sources. 
    \mainname can infer the term type of each variable to be synthesized based on the definitions of predicates and terms. The generated assignments must conform to the inductively defined syntax described in \autoref{terms-intro}; if any constant or constructor does not match the required term type, the acquisition step fails.
    
    
    \item \textbf{Sub-Derivation Steps:} Then, synthesize assignments for all subgoals in sequence. For each subgoal $P_i(\sigma_{i-1}\ldots\sigma_0\sigma(\overline{x}_i)) \leadsto \sigma_i$, begin with an instantiation for variables in $\overline{x}_i$ with the substitutions $\sigma_{i-1}, \ldots, \sigma_0, \sigma$ in previous steps. 
    This refines the subgoal and propagates synthesis information from earlier subgoals. Subsequently, synthesize the assignment $\sigma_i$ for the current subgoal by recursively applying synthesis rules.
    
    \item \textbf{Constraint Checking Step:} After all synthesis subgoals have been solved, \mainname checks whether the constraint $\phi(\overline{y})$  is satisfied under the combined assignment $\sigma_n \ldots \sigma_0 \sigma$. 
    
    \item \textbf{Assignment Collection Step:} In the final step, \mainname compose all substitutions obtained during synthesis to form $\sigma_c = \sigma_n \ldots \sigma_0 \sigma$, and restrict this assignment to only the variables occurring in $\overline{x}$, yielding the overall assignment $\theta = \sigma_c|_{\mvhzc(\overline{x})}$. 
\end{enumerate}

In steps (3) and (5), we assert without proof that the composed substitution $\sigma_c = \sigma_n \ldots \sigma_0 \sigma$ and the final collected $\theta$ are assignments. Specifically, $\sigma_c$ maps all variables in $\overline{x}$ and $\overline{y}$ to ground terms. This property can be formally established by induction on the structure of the synthesis derivation.

\begin{lemma}
\label{lemma:assignment}
For any successful application of the synthesis rule in \autoref{S-Rule}, the substitution $\sigma_c = \sigma_n \ldots \sigma_0 \sigma$ is an assignment, in the sense that $\forall i$, $\sigma_c(\skhzc{\overline{x}_i})$, $\sigma_c(\skhzc{\overline{x}})$, and $\sigma_c(\skhzc{\overline{y}})$ are all ground terms.
\end{lemma}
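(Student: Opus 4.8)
The plan is to induct on the height of the synthesis derivation tree rooted at the application in \autoref{S-Rule}. Although the statement concerns a single successful application, it is really a statement about the whole sub-tree, since each premise $P_i(\sigma_{i-1}\cdots\sigma_0\sigma(\overline{x}_i)) \leadsto \sigma_i$ is discharged by its own strictly smaller synthesis derivation. The induction hypothesis applied to the $i$-th premise states that the composed substitution of that sub-derivation grounds its goal; because $\sigma_i$ is exactly that composed substitution restricted to $\mvhzc$ of the goal, and restriction leaves the images of the goal's own variables unchanged, I obtain the key fact that $\sigma_i\sigma_{i-1}\cdots\sigma_0\sigma(\overline{x}_i)$ is ground for every $i \geq 1$. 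The base case is the premise-free rule ($n=0$), where $\overline{x}_f = \mvhzc(\overline{x}_0)$ and the acquisition step alone grounds everything; it is subsumed by the inductive argument run with an empty premise set. I deliberately work from the algorithmic rule rather than assuming the semantic characterization of $\leadsto$, since it is this lemma that the later soundness/completeness development builds on.

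Before the main argument I would record two elementary facts about substitutions. First, ground terms are fixed points of every substitution, so once a composite $\sigma_j\cdots\sigma_0\sigma(t)$ is ground, applying any later substitution $\sigma_{j+1},\dots,\sigma_n$ leaves it unchanged; this ``once ground, stays ground'' monotonicity lets me read $\sigma_c$ off the earliest stage that grounds a term. Second, if a variable $v$ occurs in a term $s$ then $\mvhzc(\rho(v)) \subseteq \mvhzc(\rho(s))$ for any $\rho$, so whenever $\rho(s)$ is ground, $\rho(v)$ is ground for each $v \in \mvhzc(s)$. Together these let me pass freely between ``the term is ground'' and ``each of its variables is grounded.''

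With these in hand the three claims fall out. For the premises, the key fact plus monotonicity give that $\sigma_c(\overline{x}_i) = \sigma_i\cdots\sigma_0\sigma(\overline{x}_i)$ is ground for each $i \geq 1$. For the conclusion I first use that $\sigma$ is the MGU of $\overline{x}$ and $\overline{x}_0$, so $\sigma(\overline{x}) = \sigma(\overline{x}_0)$ and hence $\sigma_c(\overline{x}) = \sigma_c(\overline{x}_0)$, reducing the goal to grounding $\sigma(\overline{x}_0)$. Using $\overline{x}_f = \mathrm{list}(\mvhzc(\overline{x}_0)\setminus\bigcup_{i\geq 1}\mvhzc(\overline{x}_i))$ I split $\mvhzc(\sigma(\overline{x}_0)) \subseteq \mvhzc(\sigma(\overline{x}_f)) \cup \bigcup_{i\geq 1}\mvhzc(\sigma(\overline{x}_i))$: every variable of the first set is grounded by the acquired assignment $\sigma_0$, and every variable of the $i$-th set is grounded by $\sigma_i\cdots\sigma_0$ via the key fact and the occurrence lemma, so monotonicity makes all of them grounded by $\sigma_c$. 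Finally, for the constraint I invoke the CHC well-formedness condition $\mvhzc(\overline{y}) \subseteq \bigcup_{i=0}^n \mvhzc(\overline{x}_i)$ from \autoref{chc-intro}: each variable of $\overline{y}$ lies in some $\overline{x}_j$, all of which $\sigma_c$ has just been shown to ground, so $\sigma_c(\overline{y})$ is ground.

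I expect the main obstacle to be bookkeeping around the unifier and acquisition rather than any deep argument: I must ensure that variables introduced or merged by $\sigma$ (which need not be among the original $\overline{x}_i$) are still covered, which is why decomposing $\mvhzc(\sigma(\overline{x}_0))$ through $\sigma$ and treating $\acquire{\sigma(\overline{x}_f)}$ as a genuine assignment — total on the free variables of $\sigma(\overline{x}_f)$ with ground codomain — are the load-bearing points. The one subtlety worth stating explicitly is the direction of composition, namely that $\sigma_n\cdots\sigma_0\sigma$ applies $\sigma$ first and $\sigma_n$ last, so that the monotonicity step is applied to the correct prefix of substitutions.
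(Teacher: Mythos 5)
Your proposal is correct and follows essentially the same route as the paper's own proof: structural induction on the synthesis derivation, using the identity $\sigma(\overline{x}) = \sigma(\overline{x}_0)$ from unification, the fact that $\overline{x}_f$ covers exactly the conclusion variables not occurring in any premise, and the CHC well-formedness condition $\mvhzc(\overline{y}) \subseteq \bigcup_{i=0}^n \mvhzc(\overline{x}_i)$ to ground the constraint. The only difference is that you make explicit some bookkeeping the paper leaves implicit (the ``once ground, stays ground'' monotonicity, the occurrence lemma, and the fact that restricting the sub-derivation's composed substitution to the subgoal's variables preserves groundness of the subgoal), which strengthens rather than changes the argument.
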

\begin{proof}
  The complete proof can be found in Appendix~\ref{sec:appendix-lemma2}.
\end{proof}

\subsubsection{Synthesis Derivation Trees}
\label{sec:synthesis-derivation-tree}
To formalize the structure of synthesis derivations, we introduce the concept of synthesis derivation trees in analogy to type derivation trees in type systems. 

A \emph{synthesis derivation tree} is a rooted tree that witnesses the derivability of a synthesis judgment through the application of synthesis rules. More formally, a synthesis derivation tree for a synthesis judgment $P(\skhzc{\overline{t}}) \leadsto \theta$ is a tree where: (1) the root node is labeled with $P(\skhzc{\overline{t}}) \leadsto \theta$; (2) each node is labeled with a synthesis judgment and annotated with both the synthesis rule used and the assignment for its free variables; (3) the children of each internal node correspond to the subgoals of the annotated synthesis rule; (4) for each synthesis rule, all steps including unification, assignment acquisition, subgoals synthesis and constraint checking succeed.

By Lemma~\ref{lemma:assignment}, the existence of a synthesis derivation tree for a synthesis judgment $P(\overline{t}) \leadsto \theta$ guarantees that there is an assignment $\theta$ satisfying the synthesis goal $P(\overline{t})$, that is, $\theta$ maps all variables to ground terms and $P(\sigma(\overline{t}))$ is derivable with respect to typing rules.

\subsubsection{Algorithmic Construction of Synthesis Derivation Trees}
\label{sec:synthtree-construction-query}
The algorithmic construction of synthesis derivation trees is the main focus of \mainname, whether through the interaction with LMs, or via the translation from type derivation trees.
For a synthesis goal of the form $P(\overline{t})$, the task is to synthesize an assignment $\theta$ and construct a synthesis derivation tree with the root node labeled $P(\overline{t}) \leadsto \theta$ by recursively applying synthesis rules as described in \autoref{sec:synthesis-rule-application}.


\begin{algorithm}[tb]
  \caption{Synthesis Derivation Tree Construction}
  \label{alg:synthtree}
  \KwIn{A synthesis goal $P(\skhzc{\overline{t}})$}
  \KwOut{A synthesis derivation tree $T$ whose root node is labeled $P(\skhzc{\overline{t}}) \leadsto \theta$}
  \SetKwFunction{GenSynthTree}{GenSynthTree}
  \SetKwProg{Fn}{Function}{:}{}
  
  \Fn{\GenSynthTree{$P(\overline{t})$}}{
    Select a synthesis rule $R$ with conclusion $P(\overline{x}) \leadsto \ldots$
    \tcp*{Suppose $R$ is as \autoref{S-Rule}}
    
    Instantiate $\overline{x}$ with $\overline{t}$ in $R$\;
    $\sigma \gets$ MGU from the unification step of $R$\;
    \lIf{$\sigma = \bot$}{\Return{$\bot$}} 

    $\sigma_0 \gets \func{AcquireFreeVariableAssignment}(\sigma(\overline{x}_f))$\;
    \lIf{$\sigma_0 = \bot$}{\Return{$\bot$}}
    
    \ForEach{\textnormal{subgoal} $P_i(\sigma_{i-1} \ldots \sigma_1 \sigma(\overline{x}_i))$ \textnormal{of} $R$}
    {
        $T_i \gets \GenSynthTree{ $P_i(\sigma_{i-1} \ldots \sigma_1 \sigma(\overline{x}_i))$} $\;
        \lIf{$T_i = \bot$}{\Return{$\bot$}}
        $\sigma_i \gets$ assignment from the root node of $T_i$\;
    }
    
    \lIf{$\phi(\sigma_n \ldots \sigma_0 \sigma(\overline{y})) = \bot$}{\Return{$\bot$}}
    $\theta \gets \sigma_n \ldots \sigma_0 \sigma|_{\mvhzc(\overline{t})}$\;
    \Return{$\func{MakeTreeNode}(P(\overline{t}) \leadsto \theta, R, \sigma_0, T_1, \ldots, T_n)$}
}
\end{algorithm}

This process is detailed in \autoref{alg:synthtree}. In this algorithm, only two functions act as black boxes: the selection of synthesis rules and the acquisition of assignments for free variables. Both can be viewed as query problems. Synthesis rule selection is a single-step query that chooses one rule from a finite set of candidates, while assignment acquisition is a multi-step query that constructs a ground term for each free variable using a finite set of constructors and constants.

Consequently, our approach reduces the synthesis problem to finding appropriate solvers for these two specific query problems. This makes our framework more general and modular, as the extracted query resolution problems can be implemented independently. 
Moreover, the in-time checking mechanisms (unification and constraint verification, etc.) in \autoref{alg:synthtree} significantly restrict the search space compared to naive enumeration-based approaches. 

\subsection{Query Resolution}
\autoref{fig:classfy-problem} illustrates how query resolution serves as a central mechanism in constructing a synthesis derivation tree. Queries should be resolved to determine which synthesis rule to apply and how to instantiate variables. In the following, we present two complementary approaches for query resolution: employing LMs to make decisions during code generation (\autoref{sec:query-resolution-lm}), and translating type derivation trees to prepare supervised training data (\autoref{sec:translation}).

\subsubsection{Query Resolution using LMs}
\label{sec:query-resolution-lm}
In the code generation scenario, both query resolution functions are delegated to the LM, which, given the natural language input, acts as a decision-making component for synthesis rule selection and variable assignment.

\begin{itemize}
  \item \textbf{Selection of synthesis rules:} Given the natural language description, the current partially constructed synthesis derivation tree, and the synthesis goal, the LM is expected to choose a specific synthesis rule from the available rule set.
  
  \item \textbf{Assignment of free variables:} After a synthesis rule is selected and unification is performed, the LM is required to generate a sequence of constructors and constants, constructing ground terms for assigning free variables. 
\end{itemize}

LM prediction is performed in a contextually aware manner, conditioned on multiple sources of information: (1) the natural language describing the desired program behavior; (2) the current synthesis derivation tree, which encodes both the ongoing synthesis process and the type derivation process via the isomorphism between type and synthesis derivation trees; and (3) the synthesis goal $P(\overline{t})$, which includes already synthesized program fragments and the current typing context.

The LM leverages its learned type-reasoning ability to generate the most appropriate synthesis rule and corresponding variable assignments by jointly reasoning about the semantic intentions conveyed by the natural language and the structural requirements encoded in the derivation tree. 

\subsubsection{Query Resolution through Type Derivation Tree Translation}
\label{sec:translation}
Beyond the code generation scenario in which natural language serves as input, \mainname requires curated training data to teach the LM how to interpret and solve synthesis queries. 
We generate such queries by translating existing type derivation trees of target programs, which involves recording the contextual information and the ground truth decision at each step of synthesis rule application.

The translation is performed by traversing the type derivation tree. When we visit its root node, we begin constructing the synthesis derivation tree from the root node too. 
For the $i$-th child labeled $P_i(\overline{x}_i)$ of the current node $n_1$ in the type derivation tree, upon entering its corresponding subtree, we construct the synthesis derivation tree for the $i$-th subgoal $P_i(\sigma_{i-1} \ldots \sigma_0 \sigma(\overline{x}_i))$ too.

When we return to the node $n_1$, the synthesis derivation trees for all subgoals have been constructed, allowing us to build node $n_2$ in the synthesis derivation tree. At every stage of the synthesis process, there is a node in the type derivation tree corresponding to the current synthesis goal.

\begin{itemize}
  \item \textbf{Selection of synthesis rules:}  
  Assume the node $n_1$ is labeled with a typing judgment $P(\overline{x}_c)$, annotated with a typing rule $R$, and the synthesis goal is $P(\overline{x})$. Following the procedure in \autoref{sec:meta-translation}, we translate $R$ into a synthesis rule $R'$, which is the synthesis rule to be applied.
  
  \item \textbf{Assignment of free variables:}  
    Assume the typing rule $R$ has the form in \autoref{T-Rule}, and the synthesis rule $R'$ has the form in \autoref{S-Rule}. 
    Then $P(\overline{x}_c)$ is the concretization of both the conclusion $P(\overline{x}_0)$ of the typing rule and the synthesis goal $P(\overline{x})$ (can be proved later). 
    Thus, $\overline{x}_0$ and $\overline{x}$ are unifiable, and the MGU $\sigma$ makes $\sigma(\overline{x}) = \sigma(\overline{x}_0)$ unifiable with $\overline{x}_c$. Therefore, there exists an assignment $\theta'$ such that $\theta'\sigma(\overline{x}_0) = \overline{x}_c$ and $\theta'$ covers $\mvhzc(\sigma(\overline{x}_0))$. Since $\mvhzc(\overline{x}_f) \subseteq \mvhzc(\overline{x}_0)$, we have $\mvhzc(\sigma(\overline{x}_f)) \subseteq \mvhzc(\sigma(\overline{x}_0))$, so $\theta'$ can also cover $\mvhzc(\sigma(\overline{x}_f))$. Let $\sigma_0 = \theta'|_{\mvhzc(\sigma(\overline{x}_f))}$ be the assignment for the free variables in the synthesis rule.
\end{itemize}

Since the translation begins from a valid type derivation tree, it is guaranteed to produce a valid synthesis derivation tree. In particular, the unification, assignment acquisition, and constraint checking steps will always succeed, and the synthesized assignment $\theta$ concretizes the original synthesis goal $P(\overline{x})$ to the typing judgment $P(\overline{x}_c)$ labeled at the root of the type derivation tree.

\begin{lemma}
\label{lemma:translation}
Let $T_1$ be a type derivation tree whose root node is labeled $P(\overline{x}_c)$, annotated with typing rule $R$, and $T_2$ be the translated synthesis derivation tree, whose root node is labeled $P(\overline{x}) \leadsto \theta$.
If $\,\overline{x}$ is unifiable with $\overline{x}_c$, then in the application of the synthesis rule $R'$ that derives $P(\overline{x}) \leadsto \theta$, (1) the unification step, (2) the acquisition step, and (3) the constraint checking step will succeed, and (4) the synthesized assignment $\theta$ concretizes $P(\overline{x})$ to $P(\theta(\overline{x}))$, which is equal to $P(\overline{x}_c)$.
\end{lemma}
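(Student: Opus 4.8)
The plan is to prove the four claims simultaneously by structural induction on the type derivation tree $T_1$, following the recursive shape of the translated rule $R'$ in the normal form of \autoref{S-Rule}. Throughout, let $\theta_R$ be the ground instantiation witnessing the application of $R$ at the root of $T_1$, so that $\theta_R(\overline{x}_0) = \overline{x}_c$, the $i$-th child of $T_1$ carries the label $P_i(\theta_R(\overline{x}_i))$, and the constraint $\phi(\theta_R(\overline{y}))$ holds; this is exactly the content of $T_1$ being a valid type derivation tree. The single invariant I would maintain is a compatibility condition between the synthesis side and this ground witness: after the first $i$ subgoals have been synthesized, yielding $\sigma_i \cdots \sigma_0 \sigma$, there is a ground residual assignment $\rho_i$ with $\rho_i(\sigma_i \cdots \sigma_0 \sigma(v)) = \theta_R(v)$ for every rule variable $v$, and in addition $\sigma_i \cdots \sigma_0 \sigma(\overline{x}_j) = \theta_R(\overline{x}_j)$ for all $j \le i$.

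Claims (1) and (2) are established before any subgoal is visited. For (1), I would first upgrade the hypothesized unifiability of $\overline{x}$ with $\overline{x}_c$ into unifiability of $\overline{x}$ with $\overline{x}_0$: the rule variables in $\overline{x}_0$ are fresh (disjoint from $\mvhzc(\overline{x})$) and $\overline{x}_c = \theta_R(\overline{x}_0)$, so any unifier $\tau$ of $\overline{x}$ and $\overline{x}_c$ can be merged with $\theta_R$ (disjoint domains) into one substitution equating $\overline{x}$ and $\overline{x}_0$. Hence the MGU $\sigma = \unify{\overline{x} = \overline{x}_0}$ exists and the unification step succeeds; by most-generality there is a residual $\theta'$ with $\theta'\sigma(\overline{x}_0) = \theta_R(\overline{x}_0) = \overline{x}_c$, which initializes $\rho_0 = \theta'$. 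For (2), I set $\sigma_0 = \theta'|_{\mvhzc(\sigma(\overline{x}_f))}$ as in the construction preceding the lemma; since $\mvhzc(\overline{x}_f) \subseteq \mvhzc(\overline{x}_0)$ and $\theta'$ grounds all of $\sigma(\overline{x}_0)$, each free variable is sent to a ground subterm of $\overline{x}_c$, which is a well-formed term of the required term type because it is drawn from the valid derivation $T_1$, so the acquisition step succeeds. This gives the base of the invariant.

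The inductive step over subgoals is where the real work sits, and the hard part will be reconciling the most-general substitutions produced by synthesis with the fixed ground witness $\theta_R$ while keeping the fresh variables separated. For subgoal $i+1$, the invariant yields $\rho_i(\sigma_i\cdots\sigma_0\sigma(\overline{x}_{i+1})) = \theta_R(\overline{x}_{i+1})$, which is the ground label of the $(i+1)$-th child; hence the synthesis subgoal $P_{i+1}(\sigma_i\cdots\sigma_0\sigma(\overline{x}_{i+1}))$ is unifiable with that label, and since the child subtree is strictly smaller the induction hypothesis applies, giving a successful subtree whose root assignment $\sigma_{i+1}$ satisfies $\sigma_{i+1}\sigma_i\cdots\sigma(\overline{x}_{i+1}) = \theta_R(\overline{x}_{i+1})$. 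To re-establish the invariant I would compare the two groundings $\rho_i$ and $\sigma_{i+1}$ of the common term $\sigma_i\cdots\sigma(\overline{x}_{i+1})$ position by position: equal ground results force $\rho_i(v) = \sigma_{i+1}(v)$ for every variable $v$ occurring in that term, so defining $\rho_{i+1}$ as $\rho_i$ with the newly grounded variables dropped gives $\rho_{i+1}\sigma_{i+1} = \rho_i$ on the rule variables and preserves the compatibility equation.

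Claims (3) and (4) then follow from the invariant after all $n$ subgoals. For (3), the well-formedness condition $\mvhzc(\overline{y}) \subseteq \bigcup_{i} \mvhzc(\overline{x}_i)$ together with Lemma~\ref{lemma:assignment} makes $\sigma_n\cdots\sigma_0\sigma(\overline{y})$ ground, so $\rho_n$ acts as the identity on it and the compatibility equation degenerates to $\sigma_n\cdots\sigma_0\sigma(\overline{y}) = \theta_R(\overline{y})$; since $\phi(\theta_R(\overline{y}))$ holds by validity of $T_1$, the constraint check succeeds. For (4), unification gives $\sigma(\overline{x}) = \sigma(\overline{x}_0)$, so $\sigma_n\cdots\sigma_0\sigma(\overline{x}) = \sigma_n\cdots\sigma_0\sigma(\overline{x}_0)$, which Lemma~\ref{lemma:assignment} shows is ground and which the invariant (with $\rho_n$ again trivial) equates to $\theta_R(\overline{x}_0) = \overline{x}_c$; restricting to $\mvhzc(\overline{x})$ yields $\theta(\overline{x}) = \overline{x}_c$, completing the proof.
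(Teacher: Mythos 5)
Your proposal is correct and follows essentially the same route as the paper's own proof: induction on the type derivation tree, using the ground witness of the typing-rule application (your $\theta_R$, the paper's $\theta_c$) together with freshness of rule variables and most-generality of the MGU to discharge unification and acquisition, then applying the induction hypothesis subgoal-by-subgoal before concluding the constraint and concretization claims. Your explicit residual assignment $\rho_i$ is just a more carefully maintained form of the paper's factoring relation $\trianglelefteq$ (where $\theta_c$ factors through the composed substitutions), so the two arguments coincide in substance.
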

\begin{proof}
  The complete proof can be found in the Appendix~\ref{sec:appendix-lemma3}.
\end{proof}

With Lemma~\ref{lemma:translation}, we can conclude that given a type derivation tree $T_1$ whose root node is labeled $\welltyped{p_c}$, we can start from the synthesis goal $\welltyped{\sk{p}}$, where $\sk{p}$ is a variable representing the program to be synthesized, and construct a synthesis derivation tree $T_2$ by translating from $T_1$ such that the root node of $T_2$ is labeled $\welltyped{\sk{p}} \leadsto \theta$, where $\theta(\sk{p}) = p_c$.

\subsection{Isomorphism between Type Derivation Trees and Synthesis Derivation Trees}
\label{sec:isomorphism}
Since every synthesis rule is translated from a typing rule, and every synthesis subgoal in the synthesis rule corresponds to a premise in the typing rule, we can establish an isomorphism between synthesis derivation trees and type derivation trees.

\paragraph{Isomorphism of Derivation Trees}
A type derivation tree $T_1 = (V_1, r_1)$ (with node set $V_1$ and root $r_1$) and a synthesis derivation tree $T_2 = (V_2, r_2)$ are \emph{isomorphic} if there exists a bijection $\varphi: V_1 \to V_2$ satisfying the following conditions:
\begin{enumerate}
    \item Root Correspondence: $\varphi(r_1) = r_2$
    \item Child Order Preservation: For $v \in V_1$, if $\mathsf{children}_{T_1}(v) = [v_1, \ldots, v_k]$, where $\mathsf{children}_{T}(v)$ denotes the ordered list of $v$'s children in tree $T$, then $\mathsf{children}_{T_2}(\varphi(v)) = [\varphi(v_1), \ldots, \varphi(v_k)]$
    \item Label/Annotation Correspondence: For every $v \in V_1$ labeled with typing judgment $P(\overline{t}_c)$ and a typing rule, $\varphi(v)$ is labeled with synthesis judgment $P(\skhzc{\overline{t}}) \leadsto \theta$, annotated with the corresponding synthesis rule and assignment, where $\theta(\skhzc{\overline{t}}) = \overline{t}_c$
\end{enumerate}
We write $T_1 \cong T_2$ to denote that $T_1$ and $T_2$ are isomorphic, representing the same logical structure but in different forms.
The translation procedure described in \autoref{sec:translation} provides a systematic way to construct an isomorphic synthesis derivation tree from any given type derivation tree. During this process, for each node $n_1$ in the type derivation tree, we define the correspondence $\varphi(n_1) = n_2$, where $n_2$ is the node constructed by translating $n_1$. It is straightforward to verify that this bijection preserves the root nodes and the order of children, and, by Lemma~\ref{lemma:translation}, also preserves the correspondence of labels and annotations. Thus, $T_1 \cong T_2$, as captured by the following lemma.

\begin{lemma}
\label{lemma:isomorphism1}
For any type derivation tree $T_1$, there exists a synthesis derivation tree $T_2$, $T_1 \cong T_2$.
\end{lemma}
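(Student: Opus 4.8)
The plan is to prove Lemma~\ref{lemma:isomorphism1} constructively by invoking the translation procedure of \autoref{sec:translation}, so that the witness $T_2$ is exactly the synthesis derivation tree produced by traversing $T_1$. I would proceed by structural induction on the type derivation tree $T_1$. In the inductive step, I assume $T_1$ has root labeled $P(\overline{x}_c)$ annotated with a typing rule $R$, and children $T_1^{(1)},\ldots,T_1^{(n)}$ whose roots are labeled $P_1(\overline{x}_1^c),\ldots,P_n(\overline{x}_n^c)$, the premises of $R$ under its instantiation. I start the synthesis side from a goal $P(\overline{x})$ in which $\overline{x}$ is a tuple of fresh variables matching the arity of $P$; since $\overline{x}$ consists solely of distinct variables, it is trivially unifiable with the ground $\overline{x}_c$, so the hypothesis of Lemma~\ref{lemma:translation} is met at the root.

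The key steps are as follows. First I would apply Lemma~\ref{lemma:translation} at the root: because $\overline{x}$ unifies with $\overline{x}_c$, the translated synthesis rule $R' = \varphi(R)$ (whose existence and uniqueness come from Lemma~\ref{rule-bijection}) has its unification, acquisition, and constraint-checking steps succeed, and the synthesized assignment $\theta$ satisfies $\theta(\overline{x}) = \overline{x}_c$. Second, for each child I must verify that the recursively generated subgoal $P_i(\sigma_{i-1}\ldots\sigma_0\sigma(\overline{x}_i))$ is itself unifiable with the label $\overline{x}_i^c$ of the $i$-th child of $T_1$; this is precisely the discharge needed to apply the induction hypothesis to each $T_1^{(i)}$, yielding isomorphic subtrees $T_2^{(i)}$. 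Third, I assemble $T_2$ from root node $P(\overline{x})\leadsto\theta$ annotated with $R'$ and subtrees $T_2^{(1)},\ldots,T_2^{(n)}$, and define $\varphi$ node-wise by mapping the root of $T_1$ to the root of $T_2$ and extending by the inductively defined child bijections. Finally I check the three isomorphism conditions: root correspondence and child-order preservation hold by construction of $\varphi$, and label/annotation correspondence follows from Lemma~\ref{rule-bijection} (for the rule annotation) together with the identity $\theta(\overline{x}) = \overline{x}_c$ established above (for the label).

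The main obstacle I expect is the second step, namely showing that each subgoal remains unifiable with the corresponding child label so that the induction hypothesis genuinely applies. This requires tracking how the substitutions $\sigma, \sigma_0, \sigma_1, \ldots$ accumulated from the root unification, the free-variable acquisition, and the earlier sibling subgoals interact: I would argue that at each child the instantiated subgoal $\sigma_{i-1}\ldots\sigma_0\sigma(\overline{x}_i)$ is a generalization of (i.e., unifiable with) the ground premise $\overline{x}_i^c$ from $T_1$, using that all variables introduced so far are bound consistently with the valid type derivation. This is essentially the content already asserted in the discussion preceding Lemma~\ref{lemma:translation}, that "at every stage of the synthesis process, there is a node in the type derivation tree corresponding to the current synthesis goal," but it must be made precise as a loop invariant of the traversal. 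Once this invariant is in place, the remaining verifications of the bijection conditions are routine, and the lemma follows.
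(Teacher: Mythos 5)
Your proposal is correct and takes essentially the same route as the paper: the paper also constructs $T_2$ via the translation procedure of \autoref{sec:translation}, defines $\varphi$ node-wise by mapping each type-derivation node to the node its translation creates, and obtains root and child-order preservation by construction with label/annotation correspondence supplied by Lemma~\ref{lemma:translation}. The unifiability invariant you identify as the main obstacle does not need to be re-derived as a separate loop invariant --- it is precisely what the inductive case of the paper's proof of Lemma~\ref{lemma:translation} (Appendix~\ref{sec:appendix-lemma3}) establishes when it shows each instantiated subgoal $\sigma_{i-1}\ldots\sigma_0\sigma(\overline{x}_i)$ is unifiable with the corresponding child label $\overline{x}_{ic}$, so you may cite that argument instead of repeating the induction.
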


Conversely, we can construct a type derivation tree $T_1$ from a given synthesis derivation tree $T_2$, also preserving isomorphism. The construction is straightforward: we traverse $T_2$, and for each node labeled with the synthesis goal $P(\overline{t}) \leadsto \theta$, we create a corresponding node labeled with the typing judgment $P(\theta(\overline{t}))$ and annotate it with the appropriate typing rule. Similar to Lemma~\ref{lemma:translation}, we can inductively show that the resulting $T_1$ is a type derivation tree, and that the bijection between $T_1$ and $T_2$ ensures the isomorphism $T_1 \cong T_2$.

\begin{lemma}
\label{lemma:isomorphism2}
For any synthesis derivation tree $T_2$, there exists a type derivation tree $T_1$, $T_1 \cong T_2$.
\end{lemma}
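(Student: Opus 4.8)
The plan is to prove Lemma~\ref{lemma:isomorphism2} by induction on the structure of the synthesis derivation tree $T_2$, mirroring the construction sketched immediately before the statement. For the base case, I would consider a synthesis derivation tree consisting of a single node labeled $P(\overline{t}) \leadsto \theta$ whose annotated synthesis rule $R'$ has no subgoals (i.e., $n=0$ in \autoref{S-Rule}). By the bijection $\varphi$ of Lemma~\ref{rule-bijection}, $R'$ corresponds to a unique typing rule $R$ with no premises. I construct the single-node type derivation tree labeled $P(\theta(\overline{t}))$ annotated with $R$. By Lemma~\ref{lemma:assignment}, $\theta(\overline{t})$ is ground, and I must check that the constraint of $R$ is satisfied under the instantiation --- this follows because the constraint-checking step of $R'$ succeeded in $T_2$, and the translation of constraints in step~5 of \autoref{sec:meta-translation} evaluates exactly $\phi$ under the composed substitution.

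For the inductive step, suppose the root of $T_2$ is labeled $P(\overline{t}) \leadsto \theta$, annotated with synthesis rule $R'$ (of the form in \autoref{S-Rule}) and free-variable assignment $\sigma_0$, with child subtrees $T_2^{(1)}, \ldots, T_2^{(n)}$ rooted at $P_i(\sigma_{i-1}\ldots\sigma_0\sigma(\overline{x}_i)) \leadsto \sigma_i$. By the induction hypothesis, each $T_2^{(i)}$ has an isomorphic type derivation tree $T_1^{(i)}$ whose root is labeled $P_i\bigl(\sigma_i(\sigma_{i-1}\ldots\sigma_0\sigma(\overline{x}_i))\bigr)$. I then build $T_1$ by taking a root labeled $P(\theta(\overline{t}))$, annotating it with the typing rule $R = \varphi^{-1}(R')$, and attaching the $T_1^{(i)}$ as ordered children. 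The bijection $\varphi$ on nodes is defined by sending this new root to the root of $T_2$ and composing the inductively-obtained bijections on the subtrees; root correspondence and child-order preservation are then immediate from this definition, and label correspondence holds at the root because I labeled it with exactly $P(\theta(\overline{t}))$ and annotated it with the corresponding typing rule.

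The substantive work --- and the main obstacle --- is verifying that the $T_1$ just built is genuinely a valid type derivation tree in the sense of the definition in \autoref{sec:meta}: I must confirm that the children I attached really are the premises of $R$ under a \emph{single coherent} instantiation, and that all of $R$'s constraints hold under that instantiation. The difficulty is that the synthesis rule threads substitutions $\sigma, \sigma_0, \sigma_1, \ldots, \sigma_n$ sequentially through the premises, whereas a type derivation tree requires one global instantiation applied uniformly. The key reconciling fact is that, by Lemma~\ref{lemma:assignment}, the composite $\sigma_c = \sigma_n \ldots \sigma_0 \sigma$ is already an assignment, so applying $\sigma_c$ uniformly to every premise $\overline{x}_i$ agrees with the staged application $\sigma_i(\sigma_{i-1}\ldots\sigma_0\sigma(\overline{x}_i))$ appearing in each subgoal --- because later substitutions act trivially on terms already ground, and the assignments $\sigma_j$ for $j > i$ only bind variables not yet appearing. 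I would isolate this ``staged-equals-uniform'' identity as the crux, justify it via the disjointness of the variables bound at successive stages together with Lemma~\ref{lemma:assignment}, and use it to witness $P(\theta(\overline{t}))$ as the conclusion of $R$ instantiated by $\sigma_c$ with premises exactly the roots of the $T_1^{(i)}$.

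Finally, I would note that the isomorphism condition $\theta(\overline{t}) = \overline{t}_c$ from the definition is discharged definitionally, since I chose the label of the root of $T_1$ to be $P(\theta(\overline{t}))$; thus the constructed $\varphi$ satisfies all three conditions and $T_1 \cong T_2$, completing the induction.
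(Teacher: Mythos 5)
Your proposal is correct and follows essentially the same route as the paper's proof: structural induction on $T_2$, with root and child-order preservation immediate from the translation, and label correspondence established by showing that the staged substitutions $\sigma_i \ldots \sigma_0 \sigma$ agree with the single uniform assignment $\sigma_c = \sigma_n \ldots \sigma_0 \sigma$ (the paper's $\theta_c$) on each premise and on the conclusion, using Lemma~\ref{lemma:assignment} and the groundness of already-synthesized terms. Your explicit isolation of the ``staged-equals-uniform'' identity and the constraint check is exactly the (somewhat tersely stated) crux of the paper's inductive case, so there is no substantive difference.
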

\begin{proof}
  The complete proof can be found in the Appendix~\ref{sec:appendix-lemma4}.
\end{proof}

\subsection{Soundness and Completeness}
\label{sec:soundness-completeness}
We have established an isomorphism between type derivation trees (type correctness proofs) and synthesis derivation trees (existential type correctness proofs). 
Under the isomorphism, \mainname ensures that well-typed programs can always be \emph{extracted} from synthesis derivation trees, rather than relying on direct program synthesis without reference to their type correctness proofs.

The notions of ``extract'' and ``well-typed'' can be formally defined in the context of derivation trees. 
Given a synthesis derivation tree with root node labeled $\welltyped{\sk{p}} \leadsto \theta$, the extracted program is $\theta(\sk{p})$. And a program $p_c$ is said to be well-typed if there exists a type derivation tree (type correctness proof) whose root node is labeled $\welltyped{p_c}$.

\begin{theorem}[Soundness]
Any program extracted from a synthesis derivation tree $T$ is well-typed.
\end{theorem}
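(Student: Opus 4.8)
The plan is to derive the soundness theorem as a direct consequence of the isomorphism established in Lemma~\ref{lemma:isomorphism2}, together with the definitions of ``extract'' and ``well-typed'' stated just above the theorem. Concretely, given a synthesis derivation tree $T$ whose root is labeled $\welltyped{\sk{p}} \leadsto \theta$, the extracted program is $\theta(\sk{p})$, and I must exhibit a type derivation tree whose root is labeled $\welltyped{\theta(\sk{p})}$. Since the entire structural work of building such a tree has already been carried out in Lemma~\ref{lemma:isomorphism2}, the proof reduces to reading off the correct information at the root under the witnessing bijection.

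First I would apply Lemma~\ref{lemma:isomorphism2} to the given $T$, obtaining a type derivation tree $T_1$ and a bijection $\varphi$ witnessing $T_1 \cong T$. Second, I would specialize the isomorphism conditions at the root: by Root Correspondence $\varphi$ sends the root $r_1$ of $T_1$ to the root of $T$, and by Label/Annotation Correspondence (condition (3)) instantiated with predicate $P = \mathit{well\text{-}typed}$ and term $\overline{t} = \sk{p}$, the node $r_1$ must carry the typing judgment $\welltyped{\theta(\sk{p})}$, because condition (3) forces the ground instance at $r_1$ to be exactly $\theta$ applied to the synthesis goal. Third, since $T_1$ is by construction a genuine type derivation tree and its root is labeled $\welltyped{\theta(\sk{p})}$, the definition of well-typedness immediately yields that $\theta(\sk{p})$ is well-typed, which is precisely the claim.

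Because the heavy lifting is delegated to Lemma~\ref{lemma:isomorphism2}, I do not expect a substantial obstacle; the soundness statement is essentially a corollary of the isomorphism. The only point that deserves explicit care is confirming that $\theta(\sk{p})$ is a genuine program, i.e., a ground term with no residual variables, so that $\welltyped{\theta(\sk{p})}$ is a legitimate label in a type derivation tree (whose every node must be ground). This groundness is guaranteed by Lemma~\ref{lemma:assignment}, which ensures that the assignment labeling the root of any valid synthesis derivation tree maps each variable of the goal to a ground term. I would therefore state the groundness of $\theta(\sk{p})$ explicitly before invoking the definition of well-typedness, after which the argument closes in one line.
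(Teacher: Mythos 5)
Your proposal is correct and follows essentially the same route as the paper's own proof: invoke Lemma~\ref{lemma:isomorphism2} to obtain an isomorphic type derivation tree, read off the root label $\welltyped{\theta(\sk{p})}$ via the label/annotation correspondence, and conclude by the definition of well-typedness. Your extra remark on groundness of $\theta(\sk{p})$ (via Lemma~\ref{lemma:assignment}) is a sound additional care point that the paper leaves implicit, but it does not change the argument.
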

\begin{proof}
Suppose the root node of $T$ is labeled $\welltyped{\sk{p}} \leadsto \theta$. By Lemma~\ref{lemma:isomorphism2}, there is a type derivation tree $T'$ such that $T \cong T'$. The root node of $T'$ is labeled $\welltyped{p_c}$, where $p_c = \theta(\sk{p})$ is the program extracted from $T$. By the definition of well-typed programs, $p_c$ is well-typed.
\end{proof}

\begin{theorem}[Completeness]
For any well-typed program $p_c$, there exists a synthesis derivation tree such that the program extracted from it is $p_c$.
\end{theorem}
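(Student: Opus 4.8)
The plan is to mirror the Soundness argument but run it in the opposite direction, invoking Lemma~\ref{lemma:isomorphism1} in place of Lemma~\ref{lemma:isomorphism2}. First I would unfold the definition of well-typedness: since $p_c$ is well-typed, there exists by definition a type derivation tree $T_1$ whose root node is labeled $\welltyped{p_c}$. This tree is the only object the hypothesis hands us, so it must be the seed of the construction.

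Next I would apply the translation developed in Section~\ref{sec:translation}, instantiating the synthesis goal at the root as $\welltyped{\sk{p}}$ for a fresh variable $\sk{p}$ that stands for the program to be synthesized. The bare variable $\sk{p}$ is trivially unifiable with the ground term $p_c$, so the hypothesis of Lemma~\ref{lemma:translation} is satisfied at the root; propagating this downward through the recursive translation (exactly the construction underlying Lemma~\ref{lemma:isomorphism1}) produces a genuine synthesis derivation tree $T_2$ with $T_1 \cong T_2$. By the Label/Annotation Correspondence clause of the isomorphism definition, the root of $T_2$ is then labeled $\welltyped{\sk{p}} \leadsto \theta$ with $\theta(\sk{p}) = p_c$.

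Finally I would close the argument by the definition of extraction: the program extracted from the synthesis derivation tree $T_2$ is precisely $\theta(\sk{p})$, which the previous step has identified with $p_c$. Hence $T_2$ is a synthesis derivation tree whose extracted program is $p_c$, as required.

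I expect the only delicate point to be the very first translation step at the root, where the synthesis goal carries a bare variable $\sk{p}$ rather than a structured term: one must confirm that unifying $\sk{p}$ with $p_c$ succeeds, yielding $\{\sk{p}\mapsto p_c\}$ as part of the MGU, so that Lemma~\ref{lemma:translation} genuinely applies and its conclusion $\theta(\sk{p})=p_c$ becomes available. Everything strictly below the root is then absorbed into the isomorphism guaranteed by Lemma~\ref{lemma:isomorphism1}, so no additional induction is needed beyond what that lemma already supplies; the theorem follows as an essentially immediate corollary.
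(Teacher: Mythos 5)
Your proposal is correct and follows essentially the same route as the paper: both obtain a type derivation tree $T_1$ for $\welltyped{p_c}$ from the definition of well-typedness, invoke Lemma~\ref{lemma:isomorphism1} (whose construction rests on the translation of \autoref{sec:translation} and Lemma~\ref{lemma:translation}) to produce an isomorphic synthesis derivation tree $T_2$, and read off $\theta(\sk{p}) = p_c$ from the label correspondence. Your added remark on the trivial unifiability of the fresh variable $\sk{p}$ with the ground term $p_c$ merely makes explicit a point the paper already records in the discussion following Lemma~\ref{lemma:translation}, so there is no substantive difference.
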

\begin{proof}
There exists a type derivation tree $T_1$ whose root node is labeled $\welltyped{p_c}$. By Lemma~\ref{lemma:isomorphism1}, there exists a synthesis derivation tree $T_2$ such that $T_1 \cong T_2$. The root node of $T_2$ is labeled $\welltyped{\sk{p}} \leadsto \theta$, where $\theta(\sk{p}) = p_c$. Thus, the program extracted from $T_2$ is $p_c$.
\end{proof}

\begin{corollary}
The synthesis derivation tree $T$ guaranteed by the completeness theorem is precisely the tree obtained through the translation process described in \autoref{sec:translation}.
\end{corollary}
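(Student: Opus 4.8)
The plan is to unwind the proof of the completeness theorem and observe that its existential witness is, by construction, the output of the translation procedure of \autoref{sec:translation} applied to the very same type derivation tree it starts from. Concretely, the completeness proof fixes a type derivation tree $T_1$ with root $\welltyped{p_c}$ (which exists precisely because $p_c$ is well-typed) and then invokes Lemma~\ref{lemma:isomorphism1} to produce a synthesis derivation tree $T_2 = T$ with $T_1 \cong T_2$ and $\theta(\sk{p}) = p_c$. The first step is therefore to recall that Lemma~\ref{lemma:isomorphism1} is not established by an abstract existence argument: the paragraph preceding it explicitly constructs $T_2$ by running the translation of \autoref{sec:translation} on $T_1$, defining the node correspondence $\varphi(n_1) = n_2$ node-by-node. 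Hence the $T$ delivered by completeness is literally the tree emitted by translating the chosen $T_1$, and both objects are functions of the same input $T_1$.

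The second step is to justify the word ``precisely'', i.e.\ that the translation determines a \emph{unique} synthesis derivation tree from $T_1$, so that identifying $T$ with the translated tree is unambiguous. I would argue this by structural induction on $T_1$: at each node labeled $P(\overline{x}_c)$ and annotated with typing rule $R$, the synthesis rule to apply is forced to be $R' = \varphi(R)$ by the bijection of Lemma~\ref{rule-bijection}; the unification step returns the MGU $\sigma$; the free-variable assignment $\sigma_0 = \theta'|_{\mvhzc(\sigma(\overline{x}_f))}$ is pinned down by Lemma~\ref{lemma:translation}, since $\theta'$ is the assignment with $\theta'\sigma(\overline{x}_0) = \overline{x}_c$ on $\mvhzc(\sigma(\overline{x}_0))$; and the children are obtained by recursively translating the subtrees of $T_1$, each of whose synthesis goals is itself determined by the accumulated substitutions $\sigma_{i-1}\ldots\sigma_0\sigma$. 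Thus every node of the translated tree, together with its label, annotation, and ordered children, is forced by $T_1$, so the translation yields a single well-defined tree.

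Combining the two steps, the synthesis derivation tree whose existence the completeness theorem asserts is exactly the witness produced by Lemma~\ref{lemma:isomorphism1}, and that witness is by definition the unique tree obtained by translating $T_1$ via \autoref{sec:translation}. I expect the only genuine subtlety to be the uniqueness argument, specifically handling the harmless non-determinism introduced by fresh-variable renaming during $\alpha$-renaming and MGU computation; this is resolved by reading ``precisely'' up to variable renaming, under which the MGU and the assignments it induces are canonical. Everything else is a direct reading-off of the constructive proofs already established, so the corollary is essentially a remark making explicit that the completeness proof is constructive and coincides with the translation.
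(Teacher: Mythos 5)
Your proposal is correct and matches the paper's (implicit) reasoning: the paper states this corollary without proof precisely because its completeness theorem is proved by invoking Lemma~\ref{lemma:isomorphism1}, whose witness is constructed by running the translation of \autoref{sec:translation} on $T_1$, so the corollary is just a reading-off of that constructive proof. Your additional uniqueness argument (determinism of rule choice, MGU, and free-variable assignment, up to $\alpha$-renaming) is a sensible elaboration the paper omits, but it does not change the route.
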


The soundness and completeness theorems ensure the correspondence between well-typed programs and synthesis derivation trees. This correspondence works in both directions: every well-typed program can be represented as a synthesis derivation tree (completeness), and every synthesis derivation tree yields a well-typed program (soundness). Together, these results confirm that constructing a synthesis derivation tree is equivalent to constructing an existential type correctness proof: the isomorphism with type derivation trees provides the proof structure, while the synthesized assignment provides the witness (the program) that makes the existential statement $\exists p.\; \welltyped{p}$ hold.
\section{Language Model Design}
\label{sec:implementation}

\subsection{Model Architecture}
To better adapt to our proof-guided synthesis system and maximize token utilization efficiency, we design a novel encoder-decoder-based model architecture, illustrated in \autoref{fig:model-arch}.

Unlike standard encoder-decoder models that encode only a static input, our architecture features a \emph{dual-encoding} design that separately processes two distinct input streams: (1) the fixed natural language specification, and (2) the dynamically evolving synthesis goal at each step.

\paragraph{Static Encoding.} The natural language specification remains unchanged throughout the synthesis process. We encode it once at the beginning to produce a fixed sequence of hidden states that stays constant across all synthesis steps.

\paragraph{Dynamic Encoding.} The synthesis goal, which provides the \emph{dynamic typing context}, evolves at each step as the synthesis derivation tree grows. We re-encode the current synthesis goal at each step to capture the updated typing context. The encoder produces a separate sequence of hidden states for this dynamic input.

\paragraph{Cross-Attention Fusion.} The decoder accesses both encodings through cross-attention mechanisms. We concatenate the static encoding $H_s$ and dynamic encoding $H_d$ along the sequence dimension to form a unified key-value sequence $H = [H_s; H_d]$. The decoder's cross-attention layers then attend to this concatenated sequence, computing attention over both the task specification and the current typing context simultaneously. This design allows the model to jointly reason about both information sources when predicting the next synthesis decision.

\begin{figure}[t]
\centering
\includegraphics[width=0.9\textwidth]{assets/model_arc2.png}
\caption{Model Architecture for Constructing Synthesis Derivation Trees}
\label{fig:model-arch}
\end{figure}

As described in \autoref{sec:query-resolution-lm}, the query resolution task is delegated to the LM, which takes the natural language prompt, the synthesis decisions generated so far, and the current synthesis goal as input and outputs a token representing the synthesis rule or variable assignments.

Meanwhile, \mainname maintains a synthesis decision sequence where each newly generated token is appended, representing the next synthesis decision. This sequence incrementally constructs an existential type correctness proof. Due to its autoregressive nature, the growing sequence is fed to the transformer decoder, allowing the LM to leverage the full history of synthesis decisions when predicting the next token.

\paragraph{Model Inference Process.}
At each synthesis step:
\begin{enumerate}
    \item The current synthesis goal is extracted from the (incomplete) synthesis derivation tree.
    \item The encoder encodes the synthesis goal to provide \emph{dynamic typing context} to the LM.
    \item The decoder processes both encodings from the encoder, along with the previous synthesis decision sequence, generating the next token representing a synthesis decision.
    \item This new token is appended to the synthesis decision sequence and used to incrementally update the synthesis derivation tree as detailed in  \autoref{alg:synthtree}.
    \item The updated synthesis derivation tree yields either a new synthesis goal for the next iteration or, upon completion, a well-typed program together with its type correctness proof.
\end{enumerate}

\paragraph{Efficient Token Reuse through Incremental Construction.}
According to \autoref{alg:synthtree}, the synthesis derivation tree construction follows a serialized order, ensuring that each step only appends new decisions without modifying the existing sequence.
This property aligns naturally with the autoregressive generation paradigm of transformer decoders, where each new token is predicted based on all previously generated tokens.
As a result, we can reuse the transformer's key-value (KV) cache~\cite{DBLP:conf/mlsys/PopeDCDBHXAD23}: the intermediate attention states computed for earlier tokens in the decision sequence are cached and reused when generating subsequent tokens. This eliminates the need to recompute attention over the entire decision sequence at each decision-making step, significantly reducing the computational overhead from quadratic to linear in sequence length.

\paragraph{Model Configuration.}
Our dual-encoding architecture is built upon standard Transformer encoder-decoder models, where both the encoder and decoder follow the standard Transformer architecture~\cite{DBLP:conf/nips/VaswaniSPUJGKP17}. We evaluate on two pre-trained models: \textbf{CodeT5-220M}~\cite{DBLP:conf/emnlp/0034WJH21}, with 12-layer encoder and 12-layer decoder, 12 attention heads, a hidden dimension of 768, and approximately 220 million parameters; and \textbf{T5Gemma2-2B}~\cite{zhang2025t5gemma2seeingreading}, with 26-layer encoder and 26-layer decoder, 8 attention heads, a hidden dimension of 2304, and approximately 2 billion parameters. For both models, the static encoder and dynamic encoder share the same weights but process different input streams.

\subsection{Term Encoding}
\label{sec:term-encoding}
Terms, as defined in \autoref{terms-intro}, are inductively constructed from constants and constructors. Following the approach of GrammarT5~\cite{DBLP:conf/icse/ZhuL000C24}, we serialize the abstract syntax tree (AST) of a term using a pre-order traversal, outputting each constructor or constant as a token in sequence. When a term contains unknown variables (not yet instantiated during synthesis), we represent them using a unified unknown token $\langle?\rangle$.

For example, consider the STLC program $\tcode{abs}(\tcode{"x"}, \tcode{bool}, \tcode{var}(\tcode{"x"}))$, which represents the identity function $\lambda x{:}\tcode{bool}.\, x$. Its AST has the constructor $\tcode{abs}$ at the root, with three children: the constant $\tcode{"x"}$ (the variable name), the constant $\tcode{bool}$ (the type annotation), and the subtree $\tcode{var}(\tcode{"x"})$. The pre-order traversal yields the token sequence illustrated in \autoref{fig:term-encoding}.

\begin{figure}[tb]
\centering
\begin{tikzpicture}[x=1.2cm, y=0.8cm]
    \tikzset{
        ctor/.style={
            draw=blue!60, fill=blue!15,
            minimum height=5mm, minimum width=8mm,
            rounded corners=1.5pt, font=\small,
            text depth=0.25ex, text height=1.6ex
        },
        const/.style={
            draw=blue!40, fill=blue!8,
            minimum height=5mm, minimum width=8mm,
            rounded corners=1.5pt, font=\small,
            text depth=0.25ex, text height=1.6ex
        },
        unknown/.style={
            draw=blue!30, fill=blue!5,
            minimum height=5mm, minimum width=8mm,
            rounded corners=1.5pt, font=\small,
            text depth=0.25ex, text height=1.6ex,
            dashed
        },
    }

    \node at (-1.5, 0) {\small (a)};
    \node[ctor] (c1) at (0,0) {\tcode{abs}};
    \node[const] (c2) at (1,0) {\tcode{"x"}};
    \node[const] (c3) at (2,0) {\tcode{bool}};
    \node[ctor] (c4) at (3,0) {\tcode{var}};
    \node[const] (c5) at (4,0) {\tcode{"x"}};

    \node at (-1.5, -1.2) {\small (b)};
    \node[ctor] (p1) at (0,-1.2) {\tcode{abs}};
    \node[const] (p2) at (1,-1.2) {\tcode{"x"}};
    \node[const] (p3) at (2,-1.2) {\tcode{bool}};
    \node[unknown] (p4) at (3,-1.2) {$\langle?\rangle$};

\end{tikzpicture}

\setlength{\fboxsep}{1pt}
\caption{Term encoding as token sequences: 
\colorbox{blue!15}{\textsf{constructors}} and 
\colorbox{blue!8}{\textsf{constants}}.
(a) Full term $\tcode{abs}(\tcode{"x"}, \tcode{bool}, \tcode{var}(\tcode{"x"}))$.
(b) Partial term $\tcode{abs}(\tcode{"x"}, \tcode{bool}, \sk{e})$ with unknown variable.}
\label{fig:term-encoding}
\end{figure}

This encoding preserves the structural information of the term and can be uniquely decoded back to the original AST. The decoding is possible because each constructor has a fixed arity and known parameter types (as specified in the term definitions in \autoref{terms-intro}). When the decoder encounters a constructor token, it knows exactly how many subsequent tokens to consume and what types they should have, enabling unambiguous reconstruction of the tree structure. If the number of arguments or their types do not match the constructor's specification, the decoding fails and reports an error. During synthesis, the LM generates tokens following this same pre-order traversal order to construct terms for variable assignments. As each token is generated, the synthesis system incrementally decodes the sequence to reconstruct the corresponding AST, enabling immediate validation and integration into the synthesis derivation tree.

\subsection{Advanced Techniques for LM-Based Program Synthesis}
The program synthesis algorithm in \autoref{alg:synthtree} involves sequential actions at each node: selecting synthesis rules, acquiring variable assignments, and applying rules to update the synthesis derivation tree. 
To ensure the efficiency and correctness of this process, we implement pruning strategies that naturally correspond to the potential failures in synthesis rule application. 

At the syntactic level, we apply \emph{syntactic pruning}, which ensures the structural validity of synthesis derivation trees and enforces the well-formedness of generated sequences.
\emph{Syntactic pruning} is triggered instantly when a token is generated: (1) the synthesis goal expects a synthesis rule to apply but the token stands for a term, or vice versa; (2) the predicate in the synthesis rule's conclusion differs from that of the synthesis goal; or (3) during term generation, the token corresponds to constructors or constants (see \autoref{terms-intro}) that mismatch the expected term type.
\begin{example}
Consider generating the term $\tcode{abs}(\tcode{"x"}, \tcode{bool}, \tcode{var}(\tcode{"x"}))$ using the term encoding from \autoref{sec:term-encoding}. Suppose the current token sequence is $[\tcode{abs}, \tcode{"x"}]$, meaning the LM has generated the constructor $\tcode{abs}$ and its first argument $\tcode{"x"}$ (the variable name). According to the constructor specification $\tcode{abs} : (\tcode{String}, \tcode{Type}, \tcode{Prog})$, the next token should be of type $\tcode{Type}$. If the LM generates $\tcode{var}$ (a $\tcode{Prog}$ constructor) instead of a valid type like $\tcode{bool}$, syntactic pruning immediately rejects this token because the expected type $\tcode{Type}$ does not match the actual type $\tcode{Prog}$.
\end{example}

At the type level, we apply \emph{type pruning}, which focuses on type-level correctness and relies on premise checking during the application of synthesis rules. 
If the unification step or constraint checking step fails, the corresponding search is immediately terminated. 
This prevents the exploration of synthesis paths that cannot lead to valid type correctness proofs.
\begin{example}
Consider applying the \rulename{T-Var} rule, which has a constraint $\checkBinding{\Gamma}{x}{t}$ that checks whether the variable $x$ with type $t$ exists in the context $\Gamma$. Suppose the current synthesis goal is $\stlctype{\Gamma}{\sk{e}}{\tcode{int}}$ where $\Gamma = \{f : \tcode{bool}, g : \tcode{int}\}$. If the LM selects $f$ to instantiate $\sk{e}$, the constraint checking step verifies $\checkBinding{\Gamma}{f}{\tcode{int}}$. Since $f$ has type $\tcode{bool}$ in $\Gamma$, not $\tcode{int}$, the constraint fails and type pruning terminates this branch. The correct choice would be $g$, which has the matching type $\tcode{int}$.
\end{example}

\emph{Syntactic pruning}, \emph{type pruning}, and the previously mentioned \emph{dynamic typing context} constitute a layered enhancement framework: 
\emph{syntactic pruning} serves as a prerequisite for \emph{type pruning}, since a structurally invalid synthesis derivation cannot be applied to perform type-level checking. 
Furthermore, without the successful application of synthesis rules, the intended synthesis goal cannot be extracted from the synthesis derivation tree. 
Should all three techniques be omitted, the model reduces to a standard encoder-decoder architecture that directly maps natural language inputs to sequences of synthesis decisions, lacking any mechanism to enforce structural or type-level correctness.

Beam search~\cite{DBLP:conf/aclnmt/FreitagA17} enables efficient exploration of multiple synthesis paths by maintaining the top-$k$ candidate sequences based on cumulative likelihood scores. At each step, rather than greedily selecting only the highest-probability token, beam search expands all $k$ beams and retains the top-$k$ sequences across all candidates.
When a branch is pruned due to syntactic or type errors, we terminate its sequence and discard its KV cache, while remaining branches continue independently. This prevents invalid branches from wasting beam slots, ensuring all $k$ slots are reserved for branches that may lead to valid programs.
\section{Evaluation \label{sec:evaluation}}
We design our evaluation to answer the following research questions.
\begin{itemize}
    \item \textbf{RQ1:} How does \mainname compare with the existing code generation methods?
    \item \textbf{RQ2:} How does each component of \mainname contributes to the overall performance?
    \item \textbf{RQ3:} How does the correctness of \mainname compare with rejection sampling, the previous method for ensuring the correctness of code generation?
    \item \textbf{RQ4:} How does \mainname compare with variants that generate the type derivation tree and the program separately instead of simultaneously?
\end{itemize}

\subsection{Experimental Setup}

\subsubsection{Benchmark}
To evaluate the generality of \mainname, our evaluation considers two languages with very different characteristics, SuFu~\cite{DBLP:journals/pacmpl/JiZPXH24} and Java.
\begin{itemize}
 \item \textbf{SuFu} is an ML-style functional language designed for automatic program optimization. This language uses a customized type system for extracting operations for optimization on data structures.
 This language has limited code resources, making it representative for evaluating \mainname under low-resource conditions.

 We collect SuFu programs from its GitHub repository, which comprises 290 programs with test cases. For each program, we manually supply a natural language description, along with the relevant inductive definitions and library functions, thus collecting a dataset of code generation tasks.
 This dataset is then split randomly into a training set of 80\% tasks and a test set of 20\% tasks.
 
 \item \textbf{Java} is a popular imperative language with a sophisticated type system. In a balance between the implementation cost and the expressiveness, we implement a subset of Java that preserves its core syntax while omitting advanced features like lambda expressions.

 We consider the MBJP dataset in our evaluation, which is the Java version of the popular MBPP dataset~\cite{DBLP:journals/corr/abs-2107-03374}. Our Java subset covers 608 tasks from MBJP, and we split these tasks randomly into a training set of 90\% tasks and a test set of 10\%. 
\end{itemize}

More information about SuFu, Java, and the datasets can be found in Appendix~\ref{appendix-benchmark}.

\subsubsection{Models}
We implement \mainname based on CodeT5~\cite{DBLP:conf/emnlp/0034WJH21}, a family of encoder-decoder models for code understanding and generation. We adopt the 220M variant of CodeT5 as our base model for subsequent training and refer to our model as \mainname-220M. 
We also evaluate \mainname on T5Gemma2~\cite{zhang2025t5gemma2seeingreading}, a more recent encoder-decoder model family built upon Gemma 3. We adopt the 2B variant of T5Gemma2 as our base model and refer to this model as \mainname-2B. 

\subsubsection{Evaluation Metrics} 
\label{sec:evaluation-metrics}
To evaluate the quality of code generation, we adopt the metrics below, which respectively measure the correctness, effectiveness, and type correctness:

\begin{itemize}
    \item \textbf{Pass@k (k=1,10)~\cite{DBLP:journals/corr/abs-2107-03374}}: These metrics quantify the proportion of problems where at least one of the top-k generated candidates passes all reference test cases. Each generated candidate is executed against the test cases provided in the benchmark, and a candidate is considered correct only if it produces the expected output for all test cases. 

    \item \textbf{First Success Position (FSP)}: This metric is the mean rank of the first functionally correct candidate among the generated ones across the test set. It measures how early in the candidate list a correct solution appears, assessing the efficiency of output ranking.

    \item \textbf{Compilation Error Rate (CER)}: This metric quantifies the proportion of generated candidates that fail to compile. 
    Since \mainname generates programs with type correctness guarantees, CER reflects the type system's effectiveness in catching compile-time errors. 
\end{itemize}

\subsubsection{Implementation Details}

We fine-tuned both \mainname-220M (based on CodeT5-220M) and \mainname-2B (based on T5Gemma2-2B) following a standard procedure with the same set of hyperparameters to ensure a fair comparison. Each model was fine-tuned on the training set until convergence, with multiple checkpoints saved throughout the process. The best-performing checkpoint is used for evaluation.
Since \mainname introduces a new representation that differs from standard code tokens, we perform an initial fine-tuning phase using a supervised fine-tuning dataset from OpenCoder~\cite{DBLP:journals/corr/abs-2411-04905} to help the model adapt to this representation.

All experiments were performed on a computing server equipped with 2~$\times$~AMD EPYC 9655 CPUs, 512~GB DDR5 RAM, and 8~$\times$~NVIDIA GeForce RTX 4090 GPUs, running Ubuntu 22.04.5 LTS with CUDA 12.8.

\subsection{Experimental Results}

\subsubsection{RQ1: Performance of \mainname}
\autoref{tab:model-results} summarizes the overall performance of \mainname. 
The top four models are fine-tuned and evaluated on the SuFu dataset, while the bottom four models are on the MBJP dataset. For each language, we compare the baseline models (CodeT5-220M and T5Gemma2-2B) with their corresponding \mainname variants.

\begin{table}[ht]
\centering
\begin{threeparttable}
\caption{Comparison of Accuracy and Compilation Error Rate between CodeT5 and \mainname on SuFu and Java Datasets}
\label{tab:model-results}
\begin{tabular}{
    l 
    c
    S[table-format=2.2] 
    S[table-format=2.2] 
    S[table-format=1.2] 
    S[table-format=2.2]
}
\toprule
Language & Model & {pass@1 (\%)\tnote{\updir}} & {pass@10 (\%)\tnote{\updir}} & {FSP\tnote{\downdir}} & {CER (\%)\tnote{\downdir}} \\ 
\midrule
\multirow{4}{*}{SuFu} 
        & CodeT5-220M         & 24.14 & 32.76 & 7.03 & 83.10 \\
        & T5Gemma2-2B         & 29.31 & 37.93 & 6.69 & 61.21 \\
        \cmidrule{2-6}
        & \mainname-220M      & 37.93 & 46.55 & 5.48 & 0.00 \\
        & \mainname-2B        & \textbf{43.10} & \textbf{50.00} & \textbf{5.03} & \textbf{0.00} \\
\midrule
\multirow{4}{*}{Java} 
        & CodeT5-220M         & 10.45 & 20.90 & 8.19 & 38.51  \\
        & T5Gemma2-2B         & 17.91 & 35.82 & 6.99 & 15.22 \\
        \cmidrule{2-6}
        & \mainname-220M      & 11.94 & 28.36 & 7.94 & 3.52 \\
        & \mainname-2B        & \textbf{23.19} & \textbf{40.30} & \textbf{6.76} & \textbf{3.12} \\
\bottomrule
\end{tabular}
\begin{tablenotes}
\footnotesize
\item \updir\, Higher is better; \downdir\, Lower is better; \textbf{bold} indicates best value per column.
\item pass@k measures the proportion of problems solved by at least one of the top-k candidates (verified by test cases); FSP is the average rank of the first correct candidate; CER is the compilation error rate. See \autoref{sec:evaluation-metrics} for details.
\end{tablenotes}
\end{threeparttable}
\end{table}

\mainname consistently improves code generation accuracy across both base models. On SuFu, \mainname-220M improves pass@10 from 32.76\% (CodeT5-220M) to 46.55\%, and \mainname-2B improves it from 37.93\% (T5Gemma2-2B) to 50.00\%. On Java, \mainname-220M improves pass@10 from 20.90\% to 28.36\%, and \mainname-2B improves it from 35.82\% to 40.30\%.
The improvements are particularly pronounced on SuFu, indicating that \mainname enhances the model's ability to reason about type constraints on languages with complex type systems and limited training data.

In terms of the position of the first successful prediction (FSP), \mainname consistently reduces FSP compared to the corresponding base models. On SuFu, \mainname-220M reduces FSP from 7.03 to 5.48, and \mainname-2B reduces it from 6.69 to 5.03. On Java, \mainname-220M reduces FSP from 8.19 to 7.94, and \mainname-2B reduces it from 6.99 to 6.76.

Most notably, \mainname greatly reduces compilation errors, as indicated by the CER numbers.
In SuFu, all compilation errors are caused by type errors, so \mainname completely eliminates compilation errors (CER = 0.00\%). 
In Java, compilation errors may also be triggered by static analysis, which is uncaptured in our implemented type system(e.g., unreachable code), which is why \mainname retains a small CER.
Nonetheless, \mainname still reduces compilation errors significantly: \mainname-220M reduces CER from 38.51\% to 3.52\%, and \mainname-2B reduces it from 15.22\% to 3.12\%.

Overall, \mainname achieves clear performance gains and significant improvements in code correctness and reliability for both programming languages, fulfilling our design objectives.

\subsubsection{RQ2: Contributions of Individual Components in \mainname}
To answer RQ2, we conduct an ablation study on the SuFu benchmark and the \mainname-220M model to quantify the contributions of each component in \mainname. Starting from the base case where \mainname synthesizes decision sequences with no checks, system components (syntactic pruning, type pruning, and dynamic typing context, as introduced in \autoref{sec:implementation}) are incrementally added, and the impact on performance metrics is evaluated at each stage. 

\begin{table}[ht]
\centering
\setlength{\tabcolsep}{4pt}
\begin{threeparttable}
\caption{Sequential Ablation Study: Performance with Incremental Component Additions}
\label{tab:ablation-combined}
\begin{tabular}{
    l 
    S[table-format=2.2]@{\hspace{1em}}S[table-format=+2.2]
    S[table-format=2.2]@{\hspace{1em}}S[table-format=+2.2]
    S[table-format=1.2]@{\hspace{1em}}S[table-format=+1.2]
    S[table-format=2.2]@{\hspace{1em}}S[table-format=+2.2]
}
\toprule
\multicolumn{1}{l}{Components} &
\multicolumn{2}{c}{pass@1 (\%)\tnote{\updir}} &
\multicolumn{2}{c}{pass@10 (\%)\tnote{\updir}} &
\multicolumn{2}{c}{FSP\tnote{\downdir}} &
\multicolumn{2}{c}{CER (\%)\tnote{\downdir}} \\
\cmidrule(lr){2-3} \cmidrule(lr){4-5} \cmidrule(lr){6-7} \cmidrule{8-9}
 & {Val} & {$\Delta$} & {Val} & {$\Delta$} & {Val} & {$\Delta$} & {Val} & {$\Delta$} \\
\midrule
Base (no checks)         & 24.14 & 0.00  
                        & 36.21 & 0.00   
                        & 6.78  & 0.00  
                        & 74.26 & 0.00   \\
+ Syntactic Pruning         & 27.59 & +3.45  
                        & 36.21 & 0.00  
                        & 6.71  & -0.07 
                        & 72.13 & -2.13  \\
+ Type Pruning            & \textbf{37.93} & \textbf{+13.79} 
                        & 43.10 & +6.89 
                        & 5.79  & -0.99 
                        & \textbf{0.00} & \textbf{-74.26} \\
+ Dynamic Typing Context       & \textbf{37.93} & \textbf{+13.79}
                        & \textbf{46.55} & \textbf{+10.34}
                        & \textbf{5.48}  & \textbf{-1.30}
                        & \textbf{0.00}  & \textbf{-74.26} \\
\bottomrule
\end{tabular}
\begin{tablenotes}
\footnotesize
\item[\,] \textit{Val}: absolute value for each setting; $\Delta$: improvement relative to the base setting.
\end{tablenotes}
\end{threeparttable}
\end{table}

Starting from the base setting, each subsequent enhancement contributes meaningfully to the overall improvement in code generation quality and reliability. 

Syntactic pruning improves pass@1 by +3.45\% and reduces CER by 2.13\%, demonstrating that basic structural constraints help filter trivially invalid outputs. 
Type pruning delivers transformative results by eliminating all compilation errors (CER drops to 0.00\%) while substantially boosting pass@1 to 37.93\% (+13.79\%) and pass@10 to 43.10\% (+6.89\%), highlighting type correctness as critical for synthesis. 
Dynamic typing context further enhances performance, pushing pass@10 to 46.55\% (+10.34\%) and reducing FSP to 5.48 (-1.30), showing that richer typing context enables generating more correct solutions among higher-ranked outputs.

The results validate our motivation: the type-based mechanisms are essential for ensuring type correctness.
In particular, the effect of dynamic typing context demonstrates the importance of \textbf{Context Locality} in ~\autoref{section:intro}: by providing necessary type information within a compact local context, this technique significantly improves the effectiveness of generation.

\subsubsection{RQ3: Comparison with Rejection Sampling}
To answer RQ3, we compare \mainname with two baselines: the vanilla CodeT5-220M (without any post-processing) and CodeT5-220M with rejection sampling, the previous state-of-the-art for achieving guarantees in code generation. 
In our case, rejection sampling queries the LM repeatedly and discards those candidate programs that cannot be compiled until finding sufficient solutions or reaching a resource limit.
Notably, rejection sampling is functionally equivalent to constrained decoding, as both methods effectively re-normalize the model's output distribution over the subspace of valid, compilable programs. By including rejection sampling, we establish a strong baseline for the performance of the model under strict correctness constraints.
\begin{table}[ht]
\centering
\begin{threeparttable}
\caption{Comparison: Vanilla CodeT5, Rejection Sampling, and \mainname}
\label{tab:model-compare-sufu-java}
\begin{tabular}{
    l
    l
    S[table-format=2.2]
    S[table-format=2.2]
}
\toprule
Language & Model & {pass@1 (\%)\tnote{\updir}} & {pass@10 (\%)\tnote{\updir}} \\
\midrule
\multirow{3}{*}{SuFu} 
        & CodeT5                & 24.14 & 32.76 \\
        & CodeT5 + Rejection Sampling & 31.03 & 32.76 \\
        & \mainname-220M             & \textbf{37.93} & \textbf{46.55} \\
\midrule
\multirow{3}{*}{Java} 
        & CodeT5                & 10.45 & 20.90 \\
        & CodeT5 + Rejection Sampling & 10.45 & 20.90 \\
        & \mainname-220M             & \textbf{11.94} & \textbf{28.36} \\
\bottomrule
\end{tabular}
\end{threeparttable}
\end{table}

On both datasets, rejection sampling can provide only minor improvements over the baseline, much less than the improvements achieved by \mainname.
This demonstrates that rejection sampling, as a post-hoc filtering technique, fails to enhance the LM's ability to generate correct code and is thus limited by the candidate quality produced by the base LM. 
Similarly, constrained decoding, which also relies on filtering during generation, would face comparable limitations as it cannot fundamentally improve over the base LM, either.

In contrast, \mainname consistently achieves higher pass@1 and pass@10, reflecting its ability to generate more correct and top-ranked predictions. 
Unlike filtering-based methods, \mainname explicitly incorporates typing rules and typing context during training, enabling the model to learn and utilize type constraints. 
These results highlight that embedding type information into LMs is substantially more effective than post-hoc pruning approaches, demonstrating the critical importance of \textbf{Type Explicitness} in \autoref{section:intro}.

\subsubsection{RQ4: Comparison with Separated Type-Code Generation}
To answer RQ4, we compare \mainname with approaches that separate type reasoning and code generation. 
We evaluate three variants: (1) Type-First: CodeT5-220M is fine-tuned to generate a sequence of typing rules (forming the type derivation tree) as explicit type reasoning, then generate code based on the reasoning, (2) Code-First: CodeT5-220M is finetuned to first generate code, then output a sequence of typing rules as its type correctness proof, and (3) \mainname-220M: our synthesis decision sequence representation.

\begin{table}[ht]
\centering
\begin{threeparttable}
\caption{Comparison: Separated vs. Integrated Type-Code Generation}
\label{tab:sequential-compare}
\begin{tabular}{
    l
    l
    S[table-format=2.2]
    S[table-format=2.2]
    S[table-format=2.2]
    S[table-format=1.2]
    S[table-format=3.0]
}
\toprule
Language & Model & {pass@1 (\%)\tnote{\updir}} & {pass@10 (\%)\tnote{\updir}} & {CER (\%)\tnote{\downdir}} & {FSP\tnote{\downdir}} & {Avg Tokens\tnote{\downdir}} \\
\midrule
\multirow{3}{*}{SuFu}
        & CodeT5 + Type-First        & 31.03 & 36.21 & 64.31 & 6.57 & 752 \\
        & CodeT5 + Code-First        & 29.31 & 39.66 & 56.72 & 6.28 & 752 \\
        & \mainname-220M             & \textbf{37.93} & \textbf{46.55} & \textbf{0.00} & \textbf{5.48} & \textbf{410} \\
\midrule
\multirow{3}{*}{Java}
        & CodeT5 + Type-First        & \textbf{11.94} & 26.87 & 37.61 & \textbf{7.36} & 247 \\
        & CodeT5 + Code-First        & 8.96 & 22.39 & 31.49 & 8.21 & 247 \\
        & \mainname-220M             & \textbf{11.94} & \textbf{28.36} & \textbf{3.52} & 7.94 & \textbf{129} \\
\bottomrule
\end{tabular}
\end{threeparttable}
\end{table}

\autoref{tab:sequential-compare} compares separated reasoning approaches against \mainname on both SuFu and Java benchmarks.
Both sequential variants achieve pass@10 of 36--40\% on SuFu and 22--27\% on Java, while \mainname achieves 46.55\% and 28.36\% respectively.
Moreover, they still suffer from high compilation error rates (57--64\% on SuFu and 31--38\% on Java), whereas \mainname eliminates compilation errors on SuFu and reduces them to 3.52\% on Java.

Besides the correctness advantages, \mainname also outperforms both baselines in token usage.
In comparison, \mainname reduces token consumption by 45\% on SuFu (410 vs. 752 tokens) and 48\% on Java (129 vs. 247 tokens) compared to the baselines.

These results validate \mainname's \textbf{Derivation Vicinality} paradigm: by unifying type reasoning and code generation into a single synthesis process, \mainname achieves both higher accuracy and improved efficiency.

\section{Related Work}\label{sec:related}

\subsection{Rejecting Invalid Generations}
Existing studies have primarily explored \emph{external} enforcement of symbolic constraints into LMs.
A common baseline is the \emph{generate-and-validate} approach~\cite{DBLP:conf/sigsoft/ZhuSXZY0Z21,DBLP:conf/popl/LongR16}, which repeatedly generates code and checks it with an external validator (e.g., a compiler) until a valid candidate is found. While this ensures constraint satisfaction, it is often inefficient due to a high rate of invalid generations.

To address this, \emph{constrained decoding} techniques perform on-the-fly validation during token generation, discarding tokens that cannot lead to a valid complete program. This proactive strategy improves efficiency and has been applied to enforce various constraints, including syntactic correctness~\cite{ugare2024syncodellmgenerationgrammar}, type correctness~\cite{DBLP:journals/corr/abs-2504-09246}, and vulnerability avoidance~\cite{DBLP:conf/issre/StorhaugLH23}. However, such token-level pruning can also distort the LM's output distribution~\cite{DBLP:conf/nips/ParkWBPD24}. We compare with rejection sampling in our evaluation (\autoref{tab:model-compare-sufu-java}), which is functionally equivalent to constrained decoding as both re-normalize the model's output distribution over compilable programs.

Critically, both strategies enforce constraints \emph{externally}. As discussed, externally enforced constraints do not modify the LM's internal reasoning process. Consequently, the model's internal distribution may become skewed, potentially assigning low probability to correct code snippets due to inaccurate assessments of their typability.

\subsection{Internalizing Symbolic Constraints}
Another line of research aims to expose the reasoning process of symbolic constraints to the LMs, thereby guiding them to learn to satisfy the constraints.
By externalizing the complexity of constraint reasoning, this paradigm relieves LMs from implicitly inferring intricate symbolic rules from plain text. Offloading constraint learning to the representation allows LMs to allocate more capacity to other critical aspects, such as program semantics and algorithmic logic. This reallocation of modeling resources leads to improved generation quality that scales effectively with model size, mirroring the benefits observed in syntactic representations for grammatical learning~\cite{DBLP:conf/acl/LiangZ0LLXCZZZC25}.

Grammar-based encoding~\cite{DBLP:conf/acl/YinN17,DBLP:conf/aaai/SunZXSMZ20,DBLP:conf/icse/ZhuL000C24} represents one such approach, where programs are parsed into ASTs and serialized into sequences of grammar rules for training and inference. 
Grammar-based encoding can be viewed as a special case of our framework when grammar rules are expressed as CHCs. Our approach generalizes this concept to arbitrary CHC-representable constraints.

Tare~\cite{DBLP:conf/icse/ZhuSZXZ23} introduces a neural approach for learning Java's type system in program repair, using specialized components to encode typing contexts and subtyping relations. While it predicts both code and types, Tare is specifically tailored to a subset of Java's type system and operates through AST traversal without achieving full type explicitness, which is a key distinction from our work.

Refine4LLM~\cite{DBLP:journals/pacmpl/CaiHSLLSD25} generates programs through sequential refinement rule applications, transforming specifications into concrete programs while ensuring syntactic and functional correctness. However, it requires training data containing refinement processes, which are scarce in practice, and is specific to a particular refinement calculus. In contrast, our approach utilizes readily available programs for training given a type checker, and supports any constraints expressible as CHCs.

\subsection{Type-Guided Program Synthesis}
Our approach is inspired by type-guided program synthesis~\cite{DBLP:journals/pacmpl/GuoJJZWJP20, DBLP:conf/pldi/GuriaFH21, DBLP:conf/pldi/PolikarpovaKS16, DBLP:conf/pldi/OseraZ15}, which employs synthesis rules to produce type-correct programs. These approaches also guarantee type correctness by construction; however, they are designed for classical enumerative or constraint-based synthesis algorithms, not for neural code generation.

In contrast, our work addresses three challenges that arise when adapting type-guided synthesis to LMs.
First, existing synthesis systems do not provide a code representation suitable for LM training. We design synthesis decision sequences that can be derived from existing programs via type derivation trees, enabling the use of readily available codebases as training data.
Second, existing systems do not enforce the structural correspondence between type derivations and synthesis derivations. We design our synthesis rules to satisfy Type Explicitness and Derivation Vicinality, establishing an isomorphism that makes type reasoning fully explicit in the generated sequences.
Third, existing systems do not consider neural architectures for learning type systems. We propose a dual-encoding architecture that enables LMs to effectively learn and utilize type constraints during generation.

\section{Conclusion \label{sec:conclusion}}
This paper presents \mainname, a proof-guided approach that guarantees type correctness in code generation by constructing existential type correctness proofs through a synthesis process. Through a synthesis system isomorphic to the type system and a synthesis decision sequence representation, \mainname ensures that constructing a synthesis derivation tree is equivalent to constructing an existential type correctness proof, thereby eliminating type errors during code generation, while significantly improving overall generation performance. 

Beyond type correctness, \mainname's foundation on CHC enables broader applications, including safety verification and the generation of other structural data. Our core insight---that aligning program representations with underlying proof systems enhances the generation performance---opens promising directions to other domains where structural constraints matter.


\bibliographystyle{ACM-Reference-Format}
\bibliography{bibtex}

\newpage
\appendix
\section{Proofs for Some Lemmas Used in the Paper}

\subsection{Proof for Lemma~\ref{rule-bijection}}
\label{sec:appendix-lemma1}
\begin{lemma}
There is a bijection $\varphi$ between typing rules and synthesis rules.
\end{lemma}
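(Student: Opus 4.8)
The plan is to prove the bijection by exhibiting an explicit two-sided inverse, exploiting the fact that the translation $\varphi$ of \autoref{sec:meta-translation} is already a deterministic, component-wise procedure. Read as a function, $\varphi$ sends a typing rule --- that is, a tuple consisting of a conclusion predicate $P$ with argument terms $\overline{x}_0$, an ordered list of premises $P_1(\overline{x}_1), \ldots, P_n(\overline{x}_n)$, and a constraint $\phi(\overline{y})$ (as in \autoref{T-Rule}) --- to the synthesis rule of \autoref{S-Rule}. Since each of the five translation steps (conclusion, unification, free-variable acquisition, premise, and constraint translation) is a function of these components, $\varphi$ is well-defined; it remains to show it is injective and surjective, which I would obtain at once by constructing the reverse map $\psi$ and checking $\psi \circ \varphi = \mathrm{id}$ and $\varphi \circ \psi = \mathrm{id}$.

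First I would define $\psi$ on an arbitrary synthesis rule presented in the normal form of \autoref{S-Rule}, simply by reading off the typing-rule data. The unification premise $\unify{\overline{x} = \overline{x}_0}$ exposes the conclusion arguments $\overline{x}_0$ and the conclusion predicate $P$; each translated premise $P_i(\sigma_{i-1}\ldots\sigma_0\sigma(\overline{x}_i)) \leadsto \sigma_i$ exposes, after discarding the accumulated substitution wrapper and the synthesized output $\sigma_i$, the original premise $P_i(\overline{x}_i)$, with the left-to-right order preserved; and the constraint-checking premise $\phi(\sigma_n\ldots\sigma_0\sigma(\overline{y}))$ exposes $\phi(\overline{y})$. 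The central point to establish is that everything else in the synthesis rule is \emph{derived} rather than chosen: the free-variable list $\overline{x}_f = \mathrm{list}(\text{FV}(\overline{x}_0) \setminus \bigcup_{i=1}^n \text{FV}(\overline{x}_i))$ is a function of the retained data, and the substitutions $\sigma, \sigma_0, \ldots, \sigma_n$ together with the restriction $|_{\text{FV}(\overline{x})}$ are fixed by the rule's structure. Hence $\psi$ needs none of these as independent input and is well-defined.

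With $\psi$ in hand, the two identities follow by matching components. For $\psi \circ \varphi = \mathrm{id}$, each translation step is invertible in isolation, so applying $\psi$ recovers exactly the tuple $(P, \overline{x}_0, \{P_i(\overline{x}_i)\}_i, \phi(\overline{y}))$ fed into $\varphi$. For $\varphi \circ \psi = \mathrm{id}$, feeding the read-off tuple back through the five steps regenerates the same synthesis rule, because the derived quantities $\overline{x}_f$ and the substitution compositions are recomputed identically from that tuple. Thus $\varphi$ is a bijection between the two rule classes.

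I expect the main obstacle to be the bookkeeping around schematic variables rather than any deep argument. Two points need care. First, the conclusion variables $\overline{x}$ introduced in step (1) are fresh and bound, so the correspondence is genuinely a bijection only once one fixes a canonical choice of these names (or, equivalently, works modulo consistent renaming of schematic variables); I would state this convention explicitly so that $\varphi$ and $\psi$ are literal inverses rather than inverses-up-to-renaming. Second, I must verify that $\overline{x}_f$ being a computed quantity does not secretly encode extra degrees of freedom --- that is, two typing rules agreeing on $(P, \overline{x}_0, \{P_i(\overline{x}_i)\}_i, \phi(\overline{y}))$ but differing elsewhere cannot exist, which holds because a typing rule \emph{is} exactly that tuple. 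Once these conventions are pinned down, the component-wise one-to-one correspondence delivers the bijection with no further computation.
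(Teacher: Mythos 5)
Your proposal is correct and follows essentially the same route as the paper's proof: both establish the bijection by exhibiting the inverse map $\psi$ that strips away the synthesis machinery (unification, acquisition, and substitution compositions, all of which are derived data) and reads back the tuple $(P, \overline{x}_0, \{P_i(\overline{x}_i)\}_i, \phi(\overline{y}))$, then verify $\psi \circ \varphi = \varphi \circ \psi = \mathrm{id}$. Your additional care about canonical naming of the fresh schematic variables $\overline{x}$ is a worthwhile refinement the paper leaves implicit, but it does not change the argument.
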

\begin{proof}

$\varphi$ can be proven to be a bijection by showing that it is both surjective and injective.
\begin{itemize}
 \item The translation function $\varphi$ maps every typing rule to a synthesis rule, and synthesis rules are constructed exclusively in this manner; thus, $\varphi$ is surjective.
 \item To prove that $\varphi$ is injective, we define its inverse function $\psi$: given a synthesis rule in the form presented in Figure~\ref{S-Rule}, $\psi$ removes all substitutions $\sigma_i$, as well as the unification and assignment acquisition, and restores the terms $\overline{x}_0$ to the conclusion, recovering the typing rule as shown in Figure~\ref{T-Rule}. Therefore, $\psi \circ \varphi = \varphi \circ \psi = id$. The existence of $\psi$ demonstrates that $\varphi$ is injective.
\end{itemize}

\end{proof}

\subsection{Proof for Lemma~\ref{lemma:assignment}}
\label{sec:appendix-lemma2}
\begin{lemma}
For any synthesis judgment derived using a synthesis rule of the form in Figure~\ref{S-Rule}, the composed substitution $\sigma_c = \sigma_n \ldots \sigma_0 \sigma$ constitutes an assignment, in the sense that $\sigma_c(\skhzc{\overline{x}_i})$, $\sigma_c(\skhzc{\overline{x}})$, and $\sigma_c(\skhzc{\overline{y}})$ are all ground terms.
\end{lemma}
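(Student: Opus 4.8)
The plan is to argue by structural induction on the synthesis derivation tree, as the surrounding text suggests. The induction hypothesis is that the statement already holds for each of the $n$ sub-derivations discharging the premises; when $n=0$ the hypothesis is vacuous and serves as the base case, so a single inductive step suffices. Two facts will be used repeatedly: \textbf{(i)} a ground term is a fixed point of every substitution, so once a term is grounded at some stage, applying the later factors $\sigma_{j}$ leaves it unchanged; and \textbf{(ii)} the well-formedness condition on CHCs, $\mvhzc(\overline{y})\subseteq\bigcup_{i=0}^{n}\mvhzc(\overline{x}_i)$, together with the definition $\overline{x}_f=\text{list}(\mvhzc(\overline{x}_0)\setminus\bigcup_{i=1}^{n}\mvhzc(\overline{x}_i))$, which controls where the remaining variables can live. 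Throughout I will use that ``$\sigma_c(\overline{t})$ is ground'' entails ``$\sigma_c$ grounds every variable of $\overline{t}$'', since any ungrounded variable occurring in $\overline{t}$ would survive as a subterm of $\sigma_c(\overline{t})$.

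First I would record the two local grounding facts supplied by a successful rule application. The acquisition step succeeds, so $\sigma_0$ assigns ground terms to every variable of $\sigma(\overline{x}_f)$; hence $\sigma_0\sigma(\overline{x}_f)$ is ground, and by \textbf{(i)} so is $\sigma_c(\overline{x}_f)$. For each premise $i\ge 1$, the $i$-th sub-derivation has root label $P_i(\sigma_{i-1}\ldots\sigma_0\sigma(\overline{x}_i))\leadsto\sigma_i$; applying the induction hypothesis to that sub-derivation shows its composed substitution grounds the subgoal terms, and since $\sigma_i$ is that composed substitution restricted to the subgoal's free variables, $\sigma_i\sigma_{i-1}\ldots\sigma_0\sigma(\overline{x}_i)$ is still ground. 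By \textbf{(i)} again, $\sigma_c(\overline{x}_i)$ is ground for every $i\ge 1$.

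The heart of the argument is the coverage step for the conclusion term $\overline{x}_0$. By the definition of $\overline{x}_f$, every variable $v\in\mvhzc(\overline{x}_0)$ is either a member of $\overline{x}_f$ or a member of $\mvhzc(\overline{x}_i)$ for some premise $i\ge1$. In the first case $\sigma_c$ grounds $v$ because $\sigma_c(\overline{x}_f)$ is ground; in the second case it grounds $v$ because $\sigma_c(\overline{x}_i)$ is ground. Hence $\sigma_c$ grounds all of $\mvhzc(\overline{x}_0)$, so $\sigma_c(\overline{x}_0)$ is ground. Since $\sigma$ is the unifier with $\sigma(\overline{x})=\sigma(\overline{x}_0)$, we get $\sigma_c(\overline{x})=\sigma_c(\overline{x}_0)$, which is therefore ground as well. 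Finally, by \textbf{(ii)} every variable of $\overline{y}$ lies in some $\mvhzc(\overline{x}_i)$ with $0\le i\le n$, each of which $\sigma_c$ has just been shown to ground; therefore $\sigma_c(\overline{y})$ is ground, completing the inductive step.

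I expect the main obstacle to be the bookkeeping around the composition order: one must read $\sigma_c=\sigma_n\ldots\sigma_0\sigma$ right-to-left and consistently invoke the ground-term invariance \textbf{(i)} to argue that a grounding established at stage $j$ is not disturbed by the later factors $\sigma_{j+1},\ldots,\sigma_n$, and to argue that restricting each subgoal's assignment to its free variables does not lose groundness. The only genuinely non-mechanical point is the coverage argument for $\overline{x}_0$ and $\overline{y}$, which is exactly where the definition of the free-variable list $\overline{x}_f$ and the CHC side condition $\mvhzc(\overline{y})\subseteq\bigcup_{i}\mvhzc(\overline{x}_i)$ are indispensable: without them a variable could escape both the acquisition step and all premises and remain ungrounded.
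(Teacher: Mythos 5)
Your proposal is correct and follows essentially the same route as the paper's proof: structural induction on the synthesis derivation, grounding $\overline{x}_f$ via the acquisition step, grounding each $\overline{x}_i$ via the induction hypothesis on the sub-derivations, covering $\overline{x}_0$ by the definition of $\overline{x}_f$, transferring to $\overline{x}$ through the unification equality $\sigma(\overline{x})=\sigma(\overline{x}_0)$, and handling $\overline{y}$ via the CHC side condition. The only differences are cosmetic: you merge the paper's separate base case into a single inductive step with a vacuous hypothesis at $n=0$, and you make explicit two facts the paper uses silently (ground terms are fixed points of later substitutions, and restricting a subgoal's assignment to its free variables preserves groundness of the subgoal terms).
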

\begin{proof}
Induction on the structure of the synthesis derivation.
\begin{itemize}
 \item \textbf{Base case:} The base case occurs when the synthesis rule has no subgoals. In this situation, the free variables $\overline{x}_f$ are precisely those appearing in the typing rule's conclusion, i.e., $\mvhzc(\overline{x}_f) = \mvhzc(\overline{x}_0)$. The unification step produces a substitution $\sigma$ such that $\sigma(\overline{x}_0) = \sigma(\overline{x})$, and the assignment $\sigma_0$ obtained in this step covers all variables in $\sigma(\overline{x}_f)$. 
 
 Consequently, $\sigma_0\sigma(\overline{x}_f)$ are ground terms, thus $\sigma_0\sigma(\overline{x}_0) = \sigma_0\sigma(\overline{x})$ are also ground terms, and so is $\sigma_0\sigma(\overline{y})$ since $\overline{y}$ contains only variables that are in $\overline{x}_0$.
 
 \item \textbf{Inductive case:} The inductive case arises when the synthesis rule has one or more subgoals. By the induction hypothesis, for each $i$, the substitution $\sigma_i$ is an assignment that covers all variables in $\sigma_{i-1}\ldots\sigma_0\sigma(\overline{x}_i)$, ensuring that $\sigma_{i}\ldots\sigma_0\sigma(\overline{x}_i)$ are ground terms, and therefore so is $\sigma_c(\overline{x}_i)$ for all $i\ge1$.
 
 We similarly observe that $\sigma_0\sigma$ covers all variables in $\overline{x}_f$ so $\sigma_c$ must also cover $\overline{x}_f$ and thus $\overline{x}_0$ ($\overline{x}_f$ contains all variables in $\overline{x}_0$ besides those already instantiated in $\overline{x}_i$). 
 We finally have $\sigma_c(\overline{x}) = \sigma_n \ldots \sigma_0 \sigma(\overline{x}) = \sigma_n \ldots \sigma_0 \sigma(\overline{x}_0) = \sigma_c(\overline{x}_0)$ is a ground term, and so is $\sigma_c(\overline{y})$, since $\overline{y}$ contains only variables that are in $\bigcup_{i=0}^n\overline{x}_i$.
 \end{itemize}
\end{proof}

\subsection{Proof for Lemma~\ref{lemma:translation}}
\label{sec:appendix-lemma3}
\begin{lemma}
Let $T_1$ be a well-formed type derivation tree with root node labeled $P(\overline{x}_c)$, annotated with typing rule $R$, and let $T_2$ be its corresponding synthesis derivation tree constructed via the translation, whose root node is labeled with $P(\overline{x}) \leadsto \theta$, where $\theta$ is the assignment synthesized by Algorithm~\ref{alg:synthtree}.
If $\overline{x}$ is unifiable with $\overline{x}_c$, then in the application of the synthesis rule that derives $P(\overline{x}) \leadsto \theta$, (1) the unification step, (2) the acquisition for free variables, and (3) the constraint checking step will always succeed, and (4) the synthesized assignment $\theta$ makes $P(\overline{x})$ concretized to $P(\theta(\overline{x})) = P(\overline{x}_c)$.
\end{lemma}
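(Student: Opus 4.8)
The plan is to proceed by induction on the structure of the type derivation tree $T_1$, exploiting the fact that $T_1$ is a valid derivation: the typing rule $R$ at its root is applied under some ground instantiation $\rho$ with $\rho(\overline{x}_0) = \overline{x}_c$, each premise instance $\rho(\overline{x}_i)$ labels the root of a strictly smaller valid type derivation subtree, and the constraint $\phi(\rho(\overline{y}))$ holds. Because the synthesis rule $R' = \varphi(R)$ is constructed by traversing $T_1$ as in \autoref{sec:translation}, each of its subgoals is translated from the corresponding premise subtree, which is exactly what makes the induction hypothesis applicable to the children.

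For step (1), I would observe that $\overline{x}_c = \rho(\overline{x}_0)$ is ground and, by hypothesis, $\overline{x}$ is unifiable with $\overline{x}_c$ via some $\mu$ with $\mu(\overline{x}) = \overline{x}_c$. Since the renamed rule variables in $\overline{x}_0$ are fresh and hence disjoint from those in $\overline{x}$, the combined map $\mu \cup \rho$ is a common unifier of $\overline{x}$ and $\overline{x}_0$, so the unification step succeeds and returns an MGU $\sigma$. The universal property of the MGU yields a factoring substitution $\theta'$ with $\theta'\sigma = \mu \cup \rho$ on the relevant variables, so $\theta'\sigma(\overline{x}_0) = \overline{x}_c$. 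For step (2), since $\mvhzc(\overline{x}_f) \subseteq \mvhzc(\overline{x}_0)$, the map $\theta'$ grounds $\sigma(\overline{x}_f)$ to $\rho(\overline{x}_f)$; taking $\sigma_0 = \theta'|_{\mvhzc(\sigma(\overline{x}_f))}$ gives a valid assignment for the free variables, so the acquisition step succeeds.

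The technical heart of the argument, and the step I expect to be the main obstacle, is threading the accumulating substitutions through the premises while keeping the induction hypothesis applicable. I would maintain the invariant that after discharging the first $i$ subgoals there is a substitution $\theta'_i$ with $\theta'_i\,\sigma_i\cdots\sigma_0\sigma(v) = \rho(v)$ for every rule variable $v$, starting from $\theta'_0 = \theta'$ (for which $\theta'_0\sigma_0\sigma(\overline{x}_j) = \rho(\overline{x}_j)$ holds because $\sigma_0$ is a restriction of $\theta'$ and $\theta'\sigma = \mu\cup\rho$). This invariant shows that the $(i{+}1)$-th refined subgoal $P_{i+1}(\sigma_i\cdots\sigma_0\sigma(\overline{x}_{i+1}))$ is unifiable with the ground label $\rho(\overline{x}_{i+1})$ of its subtree; the induction hypothesis then gives that its synthesis succeeds and that $\sigma_{i+1}$ concretizes it exactly to $\rho(\overline{x}_{i+1})$. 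Because $\sigma_{i+1}$ grounds those variables consistently with $\theta'_i$, I can take $\theta'_{i+1}$ to be $\theta'_i$ restricted to the still-unbound variables, re-establishing the invariant. The delicate points are the freshness and disjointness bookkeeping and verifying that each $\sigma_{i+1}$ agrees with $\theta'_i$ on the overlap, so that $\theta'_{i+1}$ is well defined.

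Finally, steps (3) and (4) follow from the invariant together with Lemma~\ref{lemma:assignment}. After all $n$ subgoals are closed, Lemma~\ref{lemma:assignment} guarantees that $\sigma_c(\overline{y}) = \sigma_n\cdots\sigma_0\sigma(\overline{y})$ is ground; being ground it is fixed by $\theta'_n$, so it equals $\theta'_n\,\sigma_c(\overline{y}) = \rho(\overline{y})$, and since $\phi(\rho(\overline{y}))$ holds by validity of $T_1$ the constraint check succeeds, establishing step (3). For step (4), the unification step forces $\sigma(\overline{x}) = \sigma(\overline{x}_0)$, hence $\theta(\overline{x}) = \sigma_c(\overline{x}) = \sigma_c(\overline{x}_0)$, which is ground by Lemma~\ref{lemma:assignment} and is therefore fixed by $\theta'_n$, so it equals $\theta'_n\,\sigma_c(\overline{x}_0) = \rho(\overline{x}_0) = \overline{x}_c$; thus $P(\theta(\overline{x})) = P(\overline{x}_c)$ as required.
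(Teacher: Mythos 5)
Your proposal is correct and follows essentially the same route as the paper's proof: induction on the type derivation tree, using the ground instantiation of the typing rule (your $\rho$, the paper's $\theta_c$) to show the goal and the rule's conclusion have a common unifier, factoring the ground instantiation through the MGU, threading the accumulated substitutions across the premises so the induction hypothesis applies to each subgoal, and finally invoking Lemma~\ref{lemma:assignment} plus the validity of $T_1$ to discharge the constraint check and conclude $\theta(\overline{x}) = \overline{x}_c$. The only difference is presentational: you maintain an explicit invariant via the witnesses $\theta'_i$, whereas the paper expresses the same bookkeeping with its factoring relation $\trianglelefteq$.
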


\begin{proof}
  Induction on the size of the type derivation tree $T_1$. Suppose the typing rule $R$ has the form shown in Figure~\ref{T-Rule}, its corresponding synthesis rule $R'$ has the form shown in Figure~\ref{S-Rule}.
  \begin{itemize}
    \item \textbf{Base case:} 
    The base case is when the type derivation tree $T_1$ has only one node, which means the typing rule $R$ has no premises. 
    In this case, since $\overline{x}$ is unifiable with $\overline{x}_c$ and in the typing rule $R$, $\overline{x}_c$ is the concretization of $\overline{x}_0$, the unification and acquisition steps will succeed, yielding a substitution $\sigma$ such that $\sigma(\overline{x}_0) = \sigma(\overline{x})$ and $\theta'\sigma(\overline{x}_0) = \overline{x}_c$ for some assignment $\theta'$, which can be used to obtain $\sigma_0 = \theta'|_{\mvhzc(\sigma(\overline{x}_f))}= \theta'|_{\mvhzc(\sigma(\overline{x}_0))}$.
    
    By lemma~\ref{lemma:assignment}, $\sigma_0\sigma$ is an assignment that covers all variables in $\overline{x}$, $\overline{x}_0$ and $\overline{y}$, and $\sigma_0\sigma(\overline{x}_0) = \theta'\sigma(\overline{x}_0) = \overline{x}_c$, so if we set $\theta = \sigma_0\sigma|_{\mvhzc(\overline{x})}$ we have $\theta(\overline{x}) = \sigma_0\sigma(\overline{x})= \sigma_0\sigma(\overline{x}_0) = \overline{x}_c$. The constraint checking step is also satisfied because $\mvhzc(y)\subseteq \mvhzc(\overline{x}_0)$ and $\sigma_0\sigma(\overline{y})$ is the same as the verified concretized constraint in the type derivation tree.
    
    \item \textbf{Inductive case:} If the typing rule $R$ has a series premises $P_1(\overline{x}_1), \ldots, P_n(\overline{x}_n)$, then the synthesis rule $R'$ has a series of subgoals $P_1(\meta{\sigma_0\sigma(\overline{x}_1)}) \leadsto \sigma_1, \ldots, P_n(\meta{\sigma_{n-1}\ldots\sigma_0\sigma(\overline{x}_n)}) \leadsto \sigma_n$.
    
    The same as the base case, the unification and acquisition steps will succeed, yielding a substitution $\sigma$ such that $\sigma(\overline{x}_0) = \sigma(\overline{x})$ and $\theta'\sigma(\overline{x}_0) = \overline{x}_c$ for some assignment $\theta'$, and the $\sigma_0 = \theta'|_{\mvhzc(\overline{x}_f)}$.
    
    In the type derivation tree, suppose the $i$-th child node of the root node is labeled with $P_i(\overline{x}_{ic})$, then rule $R$ can be instantiated with some assignment $\theta_c$ such that $\theta_c(\overline{x}_i) = \overline{x}_{ic}$ for all $i$ ($\overline{x}_{0c}$ is just $\overline{x}_{c}$).
    We use the notation $\sigma_i \trianglelefteq \sigma_j$ to denote that $\sigma_j$ factors through $\sigma_i$, meaning that $\sigma_j = \sigma'\sigma_i$ for some substitution $\sigma'$.
    Following the derivation of $\sigma_0$, we know that $\theta'\sigma(\overline{x}_0) = \overline{x}_c$ and $\sigma_0 = \theta'|_{\mvhzc(\sigma(\overline{x}_f))}\trianglelefteq \theta'|_{\mvhzc(\sigma(\overline{x}_0))}$, thus $(\sigma_0\sigma|_{\mvhzc(\overline{x}_0)})\trianglelefteq (\theta'\sigma|_{\mvhzc(\overline{x}_0)}) \trianglelefteq \theta_c$ holds ($\theta_c$ concretizes $\overline{x}_0$) and $\sigma_0\sigma$ ranges over only variables in $\overline{x}_0$ and $\overline{x}$.
    
    Since $\overline{x}_1$ has no common variables with $\overline{x}$, $\sigma_0\sigma(\overline{x}_1)= (\sigma_0\sigma|_{\mvhzc(\overline{x}_0)})(\overline{x}_1)$, and $\theta_c(\overline{x}_1)=\overline{x}_{1c}$, hence $\sigma_0\sigma(\overline{x}_1)$ is unifiable with $\overline{x}_{1c}$.
    Thus we can use the induction hypothesis to synthesis $P_1(\meta{\sigma_0\sigma(\overline{x}_1)}) \leadsto \sigma_1$, where $\sigma_1\sigma_0\sigma(\overline{x}_1) = \overline{x}_{1c}$.
    
    Then we continue that $\sigma_1\sigma_0\sigma(\overline{x}_1) = \theta_c(\overline{x}_1)$, and $(\sigma_1\sigma_0\sigma|_{\mvhzc(\overline{x}_0,\overline{x}_1)}) \trianglelefteq \theta'$ holds, and then $\sigma_1\sigma_0\sigma(\overline{x}_2)=(\sigma_1\sigma_0\sigma|_{\mvhzc(\overline{x}_0,\overline{x}_1)})(\overline{x}_2)$ is unifiable with $\overline{x}_{2c}$, and we can apply the induction hypothesis again and again until all assignments to subgoals are synthesized.
    
    Finally, we can collect all the assignments $\sigma_n, \ldots, \sigma_1, \sigma_0$ and compose them to form $\sigma_c = \sigma_n \ldots \sigma_0\sigma$, which is an assignment and covers all variables in $\overline{x}_i$, $\overline{x}$ and $\overline{y}$ by Lemma~\ref{lemma:assignment}. $\sigma_c(\overline{x}_i)=\overline{x}_{ic}=\theta_c(\overline{x}_i)$ for all $i$, so $\theta_c \trianglelefteq \sigma_c$ since $\sigma_c$ also ranges over $\mvhzc(\overline{x})$.
    We have $\sigma_c(\overline{y}) = \theta_c(\overline{y})$, which instantiates the constraint $\phi(\overline{y})$ to be satisfied as in the type derivation tree. And let $\theta(\overline{x}) = \sigma_c|_{\mvhzc(\overline{x})}$ concretizes the synthesis goal $P(\overline{x})$ to the ground typing judgment $P(\overline{x}_c)$ ($\theta(\overline{x})=\sigma_n \ldots \sigma_0\sigma(\overline{x})=\sigma_n \ldots \sigma_0\sigma(\overline{x}_0)=\sigma_c(\overline{x}_0)=\overline{x}_c$), which is the same as the root node of the type derivation tree.
  \end{itemize}
\end{proof}

\subsection{Proof for Lemma~\ref{lemma:isomorphism2}}
\label{sec:appendix-lemma4}

\begin{lemma}
For any synthesis derivation tree $T_2$, there exists a type derivation tree $T_1$, $T_1 \cong T_2$.
\end{lemma}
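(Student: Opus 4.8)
The plan is to exhibit the inverse of the translation of Section~\ref{sec:translation} as an explicit node map and then verify, by structural induction on $T_2$, that its image is a genuine type derivation tree, after which all three isomorphism conditions follow almost by construction. First I would define the node map: for each node $n_2$ of $T_2$ labeled with a synthesis judgment $P(\overline{t}) \leadsto \theta$ and annotated with synthesis rule $R'$, I create a node $n_1$ of $T_1$, label it with the typing judgment $P(\theta(\overline{t}))$, annotate it with the typing rule $R = \psi(R')$ where $\psi$ is the inverse of the rule bijection of Lemma~\ref{rule-bijection}, and set $\varphi(n_1) = n_2$. By Lemma~\ref{lemma:assignment}, $\theta(\overline{t})$ consists of ground terms, so each label is legal for a type derivation tree. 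I keep the underlying rooted-tree shape and child ordering of $T_2$ unchanged, so that $\varphi$ is a bijection satisfying root correspondence and child-order preservation; moreover, taking $\overline{t}_c := \theta(\overline{t})$ makes the label/annotation condition hold definitionally once the annotated rules correspond. The genuine content is therefore the claim that $T_1$ is a \emph{valid} type derivation tree, which is what the induction must establish.

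Proceeding by induction on the height of $T_2$, consider its root labeled $P(\overline{x}) \leadsto \theta$ with $\theta = \sigma_n\ldots\sigma_0\sigma|_{\mvhzc(\overline{x})}$, annotated with the synthesis rule $R'$ in the normal form of Figure~\ref{S-Rule}; its typing counterpart $R$ has premises $P_i(\overline{x}_i)$, conclusion $P(\overline{x}_0)$, and constraint $\phi(\overline{y})$. Writing $\sigma_c = \sigma_n\ldots\sigma_0\sigma$, the unification step guarantees $\sigma(\overline{x}_0) = \sigma(\overline{x})$, hence $\sigma_c(\overline{x}_0) = \sigma_c(\overline{x}) = \theta(\overline{x})$, so the root label $P(\theta(\overline{x}))$ coincides with the conclusion of $R$ instantiated by $\sigma_c$. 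It remains to check that, under this same instantiation, the premises of $R$ match the translated roots of the child subtrees and that the constraint holds.

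The key computation is that each instantiated premise $P_i(\sigma_c(\overline{x}_i))$ equals the translated root of the $i$-th child. The $i$-th subgoal of $R'$ is $P_i(\sigma_{i-1}\ldots\sigma_0\sigma(\overline{x}_i)) \leadsto \sigma_i$, and by the induction hypothesis the child subtree translates to a type derivation tree whose root is $P_i(\sigma_i\sigma_{i-1}\ldots\sigma_0\sigma(\overline{x}_i))$. By Lemma~\ref{lemma:assignment} the term $\sigma_i\ldots\sigma_0\sigma(\overline{x}_i)$ is already ground, so the trailing substitutions $\sigma_{i+1},\ldots,\sigma_n$ fix it, giving $\sigma_c(\overline{x}_i) = \sigma_i\ldots\sigma_0\sigma(\overline{x}_i)$; thus the instantiated premise and the child's root label agree. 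For the constraint, the synthesis rule succeeded only because its constraint-checking step verified $\phi(\sigma_c(\overline{y}))$, which is exactly the constraint of $R$ evaluated under $\sigma_c$, and which is ground and well defined because $\mvhzc(\overline{y})\subseteq\mvhzc(\overline{x}_0)$ by CHC well-formedness. Assembling these facts shows the root is a valid application of $R$ with the correct children; the induction then certifies $T_1$ as a type derivation tree, and conditions (1)--(3) give $T_1 \cong T_2$.

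I expect the main obstacle to be the bookkeeping in this premise-matching step: the typing rule's premise must be instantiated by the single global substitution $\sigma_c$, whereas the child subtree was synthesized using only the prefix $\sigma_i\ldots\sigma_0\sigma$. Reconciling the two instantiations is precisely where the groundness guarantee of Lemma~\ref{lemma:assignment} is indispensable, since it is what lets me discard the trailing substitutions without altering the term, and I would take care to invoke it at every premise index $i$ rather than only at the leaves.
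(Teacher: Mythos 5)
Your proposal is correct and takes essentially the same route as the paper's own proof: the inverse translation mapping each node labeled $P(\overline{t}) \leadsto \theta$ to one labeled $P(\theta(\overline{t}))$, followed by induction on the tree in which the groundness guarantee of Lemma~\ref{lemma:assignment} is used to identify the full composition $\sigma_c(\overline{x}_i)$ with the prefix $\sigma_i\ldots\sigma_0\sigma(\overline{x}_i)$ and thereby match instantiated premises to the children's root labels. The only blemish is your appeal to $\mvhzc(\overline{y}) \subseteq \mvhzc(\overline{x}_0)$ --- CHC well-formedness only guarantees $\mvhzc(\overline{y}) \subseteq \bigcup_{i=0}^{n} \mvhzc(\overline{x}_i)$ --- but this is harmless, since groundness of $\sigma_c(\overline{y})$ follows directly from Lemma~\ref{lemma:assignment}.
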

\begin{proof}
  Following the translation from synthesis derivation trees to the type derivation trees detailed in \autoref{sec:isomorphism}, the root and child order preservation is easily satisfied. Then we can prove the label correspondence by induction on the size of the synthesis derivation tree $T_2$ (suppose $T_2$'s top-level synthesis rule has the form in \autoref{S-Rule}).

  \begin{itemize}
     \item \textbf{Base case:} The base case occurs when the synthesis derivation tree $T_2$ has only one node, whose synthesis rule has no premises. By the described translation, the node in the type derivation tree $T_1$ is annotated with the typing judgment $P(\theta(\overline{x}))$, if $\theta$ is the synthesized assignment. In this case, the label correspondence is clearly satisfied.
     \item \textbf{Inductive case:} The inductive case arises when the synthesis rule has one or more subgoals. 
     By the induction hypothesis, for the $i$-th subgoal $P_i(\sigma_{i-1}\ldots\sigma_0\sigma(\overline{x}_i)) \leadsto \sigma_i$, we have constructed a sub type derivation tree whose root node is labeled with the typing judgement $P_i(\overline{x}_{ic})$, where $\overline{x}_{ic}=\sigma_{i}\ldots\sigma_0\sigma(\overline{x}_i)$. 
     
     Let $\theta_c=\sigma_{i}\ldots\sigma_0|_{\mvhzc(\overline{x}_0\cup\ldots \overline{x}_n)}$, then we have $\theta_c(\overline{x}_{i})=\sigma_{i}\ldots\sigma_0\sigma(\overline{x}_i)=\overline{x}_{ic}$ for all $i\geq 1$. 
     For the conclusion part, $\theta_c(\overline{x}_{0})=\sigma_0\sigma(\overline{x}_0)=\sigma_0\sigma(\overline{x})=\sigma_n \ldots \sigma_0\sigma(\overline{x})=\theta(\overline{x})$. 
     Thus, under the assignment $\theta_c$, the corresponding typing rule, along with the translated sub type derivation trees, forms a valid type derivation tree $T_1$ whose root node is labeled with $P(\theta(\overline{x}))$.
 \end{itemize}
\end{proof}

\newpage
\section{Additional Information about Benchmarks}
\label{appendix-benchmark}

\subsection{Java Benchmark}
We have carefully designed a subset of the Java programming language that encompasses the fundamental syntactic constructs, including classes, methods, variables, arrays, generic types, operators, and control flow statements. While excluding advanced Java features such as multithreading, lambda expressions, and complex reflection mechanisms.

The Java subset retains sufficient expressiveness to represent core imperative programming paradigms. Our evaluation shows that it covers approximately 78\% of the Java programs in the original MBJP dataset, which is the Java version of MBPP dataset and available at \href{https://github.com/amazon-science/mxeval}{amazon-science/mxeval}. The complete grammar specification is presented below:

\begin{tcolorbox}[
    enhanced,
    breakable,
    colback=blue!5!white,
    colframe=blue!50!black,
    title=Java Subset Language Grammar,
    fonttitle=\bfseries,
    boxrule=1pt,
    arc=3pt
]
\begin{Verbatim}[fontsize=\small]
<program>     := <program> <program>
              |  <modifier> <type> <identifier> ;
              |  <modifier> <type> <identifier> = <term> ;
              |  <modifier> <type> <identifier> ( <stmt> ) { <stmt> }
              |  <identifier> ( <stmt> ) { <stmt> }
              |  <stmt>
              |  class <identifier> { <program> }
<stmt>        := empty
              |  <type> <identifier> = <term>;
              |  <type> <identifier>;
              |  if ( <term> ) { <stmt> }
              |  if ( <term> ) { <stmt> } else { <stmt> }
              |  while ( <term> ) { <stmt> }
              |  do { <stmt> } while ( <term> )
              |  for ( <stmt> ; <term> ; <term> ) { <stmt> }
              |  for ( <type> <identifier> : <term> ) { <stmt> }
              |  return <term>;
              |  continue;
              |  break;
              |  <term>;
              |  switch ( <term> ) { <stmt> }
              |  case <term> : <stmt>
              |  <stmt> <stmt>
<term>        := <identifier>
              |  <integer>
              |  <float>
              |  <character>
              |  <string>
              |  true
              |  false
              |  null
              |  <term> = <term>
              |  ( <type> ) <term>
              |  <term> instanceof <type>
              |  <term> ? <term> : <term>
              |  <term> . <identifier>
              |  <term> [ <term> ]
              |  new <type> ( <term> )
              |  new <type>
              |  new <term> [ <term> ]
              |  <term> . <identifier> ( <term> )
              |  <identifier> ( <term> )
              |  <type>
              |  [ <term_list> ]
              |  { <term_list> }
              |  ( <term> )
              |  <term> + <term>
              |  <term> - <term>
              |  <term> * <term>
              |  <term> / <term>
              |  <term> % <term>
              |  - <term>
              |  <term> << <term>
              |  <term> >> <term>
              |  <term> & <term>
              |  <term> | <term>
              |  <term> ^ <term>
              |  ~ <term>
              |  <term> ++
              |  <term> --
              |  ++ <term>
              |  -- <term>
              |  <term> == <term>
              |  <term> != <term>
              |  <term> < <term>
              |  <term> <= <term>
              |  <term> > <term>
              |  <term> >= <term>
              |  <term> && <term>
              |  <term> || <term>
              |  ! <term>
<type>        := int
              |  float
              |  boolean
              |  char
              |  String
              |  null
              |  <type> []
              |  <identifier> <>
              |  <identifier> < <type> >
              |  <identifier> < <type> , <type> >
              |  <type> -> <type>
              |  <identifier>
              |  void
<term_list>   := <term> , <term_list>
              |  <term>
              |  empty
<modifier>    := public | private | protected | static | final | abstract
<identifier>  := [a-zA-Z_][a-zA-Z0-9_]*
<integer>     := [0-9]+
<float>       := [0-9]+.[0-9]+
<character>   := '[^']'
<string>      := ".*"
\end{Verbatim}
\end{tcolorbox}

\subsection{SuFu Benchmark}

The SuFu dataset is sourced from the GitHub repository accompanying the SuperFusion research paper (\href{https://github.com/jiry17/SuFu}{jiry17/SuFu}). SuperFusion is an advanced program optimization tool that processes SuFu programs as input and applies a series of transformations to improve computational efficiency. 
The SuFu language features a type system that explicitly specifies those intermediate data structures to be optimized, and all code fragments operating on these data structures.
The SuFu programs in this dataset are drawn from two primary lines of prior work. The first source consists of fusion tasks primarily collected from classical textbooks such as Introduction to Algorithms, while the second originates from research on program restructuring, where the goal is to transform a reference program into specific target forms such as divide-and-conquer or single-pass. 
Each program is supplemented with a natural language description. These descriptions were first generated by the GPT-o3-mini model and then carefully reviewed and edited by human experts to correct inaccuracies and improve clarity.

A key issue with the SuFu language is that when an algorithm description is provided, the initial SuFu program still needs to be manually constructed.
However, manually constructing SuFu programs from algorithmic descriptions is highly challenging and requires considerable domain expertise in functional programming. 
Directly prompting language models to generate SuFu code from natural language descriptions often results in programs that are syntactically or semantically flawed, typically containing multiple type errors. 

The following is an example of generated SuFu programs with type errors:
\begin{tcolorbox}[
    colback=gray!5,
    colframe=black,
    boxrule=0.5pt,
    arc=2pt,
    left=8pt,
    right=8pt,
    top=8pt,
    bottom=8pt,
    title=Natural Language Prompt
]
\textit{Write a function for the mts algorithm, which finds the maximum sum among all possible tail lists of the input list.}
\end{tcolorbox}
\begin{tcolorbox}[
    colback=gray!5,
    colframe=black,
    boxrule=0.5pt,
    arc=2pt
]
\begin{Verbatim}[fontsize=\small, xleftmargin=1em]
tails :: List -> Packed NList
tails Nil = let ts = NCons(Nil, NNil) in
            rewrite (label ts)
tails Cons(_, t)@xs = let ts = tails t in
                      rewrite (label NCons(xs, ts))
                      // there is a type error, `ts` should be `unlabel ts`
mts_label xs = let ts = tails xs in
                rewrite (maximum (map sum (unlabel ts)))
\end{Verbatim}
\end{tcolorbox}

The complete SuFu grammar specification is as follows:
\begin{tcolorbox}[
    colback=blue!5!white,
    colframe=blue!50!black,
    title=SuFu Language Grammar,
    fonttitle=\bfseries,
    boxrule=1pt,
    arc=3pt,
    breakable
]
\begin{Verbatim}[fontsize=\small]
<program>     := <program> <program>
              |  <command>
<command>     := Inductive <Name> = <name> <type> 
                                    | ... 
                                    | <name> <type>;
              |  <Name> = <type>;
              |  <name> = <term>;
<type>        := {<type>, ..., <type>}
              |  <type> -> <type>
              |  Reframe <type>
              |  Unit 
              |  Int 
              |  Bool 
              |  <Name>
<term>        := unit 
              |  <integer> 
              |  true 
              |  false 
              |  <name>
              |  + | - | * | / 
              |  < | <= | > | >= | == 
              |  and | or | not
              |  <term> <term>
              |  \<name>:<type>. <term>
              |  fix <term>
              |  let <name> = <term> in <term>
              |  if <term> then <term> else <term>
              |  {<term>, ..., <term>}
              |  <term>.<integer>
              |  match <term> with <pattern> -> <term> 
                                   | ... 
                                   | <pattern> -> <term> 
                 end
              |  rewrite <term> 
              |  label <term> 
              |  unlabel <term>
<pattern>     := _ 
              |  <name>
              |  <name> <pattern>
              |  {<pattern>, ..., <pattern>}
<Name>        := [A-Z][a-zA-Z0-9_]*
<name>        := [a-z][a-zA-Z0-9_]*
<integer>     := [0-9]+
\end{Verbatim}
\end{tcolorbox}
\end{document}